\tikzstyle{strings}=[baseline={([yshift=-.5ex]current bounding box.center)}]
\tikzset{every picture/.append style={scale=.5}, transform shape, strings}
\tikzset{%
symbol/.style={%
draw=none,
every to/.append style={%
edge node={node [sloped, allow upside down, auto=false]{$#1$}}}
}
}
\tikzset{simple/.style={}}
\tikzset{nothing/.style={outer sep=-3.4pt}}
\tikzset{map/.style={draw,fill=white, rectangle}}
\tikzset{dot/.style={thick, fill=black, circle, scale=1, inner sep = .05cm}}
\tikzset{oa/.style={draw, scale=0.9,minimum height=.1cm,circle,append after command={
[shorten >=\pgflinewidth, shorten <=\pgflinewidth,]
(\tikzlastnode.north) edge (\tikzlastnode.south)
(\tikzlastnode.east) edge (\tikzlastnode.west)
}
}
}
\tikzset{ox/.style={draw, rotate=45,scale=0.9,minimum height=.1cm,circle,append after command={
[shorten >=\pgflinewidth, shorten <=\pgflinewidth,]
(\tikzlastnode.north) edge (\tikzlastnode.south)
(\tikzlastnode.east) edge (\tikzlastnode.west)
}
}
}
\tikzset{circ/.style={
shape=circle, inner sep=1pt, draw}}
\tikzstyle{none}=[inner sep=-1pt]
\tikzstyle{circle}=[shape=circle,draw]
\tikzset{wires/.style={}}
\tikzset{box/.style={inner sep=0pt, thick, draw=black, text height=1.5ex, text depth=.25ex, text centered, minimum height=3em, anchor=center}}
\newcommand{\X}{\mathbb{X}}
\newcommand{\A}{\mathbb{A}}
\newcommand{\C}{\mathbb{D}}
\newcommand{\D}{\mathbb{C}}
\newcommand{\Y}{\mathbb{Y}}
\newcommand{\R}{\mathbb{R}}
\newcommand{\f}{{\sf f}}
\newcommand{\ox}{\otimes}
\newcommand{\op}{\mathsf{op}}
\newcommand{\priyaa}[1]{\textcolor{purple}{#1}}
\newcommand{\chaos}{\mathsf{chaos}}
\newcommand{\order}{\mathsf{order}}
\renewcommand{\phi}{\varphi}
\newcommand{\Rand}{{\sf Rand}}
\newcommand{\Uniform}{{\sf Uniform}}
\newcommand{\Bip}{{\sf Bip}}
\newcommand{\PureBip}{{\sf PureBip}}
\newcommand{\LOCC}{{\sf LOCC}}
\newcommand{\LOCCp}{{\sf LOCC_p}}
\newcommand{\Divergence}{{\sf Distinguish}}
\newcommand{\cDivergence}{{\sf cDistinguish}}
\newcommand{\Processing}{{\sf Processing}}
\newcommand{\cProcessing}{{\sf cProcessing}}
\newcommand{\qRand}{{\sf qRand}}
\newcommand{\qUniform}{{\sf qUniform}}
\newcommand{\pCat}[2]{\left({\sf chaos}_{#1}, #2_{#1}\right)}
\newcommand{\Schmidt}{{\sf Schmidt}}
\newcommand{\chan}[1]{\mathcal{#1}}
\newcommand{\maxmix}[1]{\frac{1}{{\sf dim}(#1) } 1_{#1}}
\newcommand{\Shannon}{\mathsf{Shannon}}
\newcounter{dummy} 
\numberwithin{dummy}{section}
\newtheorem{lemma}[dummy]{Lemma}  
\newtheorem{theorem}[dummy]{Theorem}
\theoremstyle{definition}
\newtheorem{definition}[dummy]{Definition}
\newtheorem{corollary}[dummy]{Corollary} 
\theoremstyle{definition}
\newtheorem{example}[dummy]{Example}
\numberwithin{equation}{section}
\newlength{\llcfoo}
\newdimen\w@dth
\def\setw@dth#1#2{\setbox\z@\hbox{\scriptsize $#1$}\w@dth=\wd\z@
\setbox\@ne\hbox{\scriptsize $#2$}\ifnum\w@dth<\wd\@ne \w@dth=\wd\@ne \fi
\advance\w@dth by 1.2em}
\def\t@^#1_#2{\allowbreak\def\n@one{#1}\def\n@two{#2}\mathrel
{\setw@dth{#1}{#2}
\mathop{\hbox to \w@dth{\rightarrowfill}}\limits
\ifx\n@one\empty\else ^{\box\z@}\fi
\ifx\n@two\empty\else _{\box\@ne}\fi}}
\def\t@@^#1{\@ifnextchar_ {\t@^{#1}}{\t@^{#1}_{}}}
\def\t@left^#1_#2{\def\n@one{#1}\def\n@two{#2}\mathrel{\setw@dth{#1}{#2}
\mathop{\hbox to \w@dth{\leftarrowfill}}\limits
\ifx\n@one\empty\else ^{\box\z@}\fi
\ifx\n@two\empty\else _{\box\@ne}\fi}}
\def\t@@left^#1{\@ifnextchar_ {\t@left^{#1}}{\t@left^{#1}_{}}}
\def\two@^#1_#2{\def\n@one{#1}\def\n@two{#2}\mathrel{\setw@dth{#1}{#2}
\mathop{\vcenter{\hbox to \w@dth{\rightarrowfill}\kern-1.7ex
                 \hbox to \w@dth{\rightarrowfill}}%
       }\limits
\ifx\n@one\empty\else ^{\box\z@}\fi
\ifx\n@two\empty\else _{\box\@ne}\fi}}
\def\tw@@^#1{\@ifnextchar_ {\two@^{#1}}{\two@^{#1}_{}}}
\def\tofr@^#1_#2{\def\n@one{#1}\def\n@two{#2}\mathrel{\setw@dth{#1}{#2}
\mathop{\vcenter{\hbox to \w@dth{\rightarrowfill}\kern-1.7ex
                 \hbox to \w@dth{\leftarrowfill}}%
       }\limits
\ifx\n@one\empty\else ^{\box\z@}\fi
\ifx\n@two\empty\else _{\box\@ne}\fi}}
\def\t@fr@^#1{\@ifnextchar_ {\tofr@^{#1}}{\tofr@^{#1}_{}}}
\newdimen\W@dth
\def\setW@dth#1#2{\setbox\z@\hbox{$#1$}\W@dth=\wd\z@
\setbox\@ne\hbox{$#2$}\ifnum\W@dth<\wd\@ne \W@dth=\wd\@ne \fi
\advance\W@dth by 1.2em}
\def\T@^#1_#2{\allowbreak\def\N@one{#1}\def\N@two{#2}\mathrel
{\setW@dth{#1}{#2}
\mathop{\hbox to \W@dth{\rightarrowfill}}\limits
\ifx\N@one\empty\else ^{\box\z@}\fi
\ifx\N@two\empty\else _{\box\@ne}\fi}}
\def\T@@^#1{\@ifnextchar_ {\T@^{#1}}{\T@^{#1}_{}}}
\def\T@left^#1_#2{\def\N@one{#1}\def\N@two{#2}\mathrel{\setW@dth{#1}{#2}
\mathop{\hbox to \W@dth{\leftarrowfill}}\limits
\ifx\N@one\empty\else ^{\box\z@}\fi
\ifx\N@two\empty\else _{\box\@ne}\fi}}
\def\T@@left^#1{\@ifnextchar_ {\T@left^{#1}}{\T@left^{#1}_{}}}
\def\Tofr@^#1_#2{\def\N@one{#1}\def\N@two{#2}\mathrel{\setW@dth{#1}{#2}
\mathop{\vcenter{\hbox to \W@dth{\rightarrowfill}\kern-1.7ex
                 \hbox to \W@dth{\leftarrowfill}}%
       }\limits
\ifx\N@one\empty\else ^{\box\z@}\fi
\ifx\N@two\empty\else _{\box\@ne}\fi}}
\def\T@fr@^#1{\@ifnextchar_ {\Tofr@^{#1}}{\Tofr@^{#1}_{}}}
\def\Two@^#1_#2{\def\N@one{#1}\def\N@two{#2}\mathrel{\setW@dth{#1}{#2}
\mathop{\vcenter{\hbox to \W@dth{\rightarrowfill}\kern-1.7ex
                 \hbox to \W@dth{\rightarrowfill}}%
       }\limits
\ifx\N@one\empty\else ^{\box\z@}\fi
\ifx\N@two\empty\else _{\box\@ne}\fi}}
\def\Tw@@^#1{\@ifnextchar_ {\Two@^{#1}}{\Two@^{#1}_{}}}
\def\to{\@ifnextchar^ {\t@@}{\t@@^{}}}
\def\from{\@ifnextchar^ {\t@@left}{\t@@left^{}}}
\def\tofro{\@ifnextchar^ {\t@fr@}{\t@fr@^{}}}
\def\To{\@ifnextchar^ {\T@@}{\T@@^{}}}
\def\From{\@ifnextchar^ {\T@@left}{\T@@left^{}}}
\def\Two{\@ifnextchar^ {\Tw@@}{\Tw@@^{}}}
\def\Tofro{\@ifnextchar^ {\T@fr@}{\T@fr@^{}}}
\begin{document}

\title{Extending Resource Monotones using Kan Extensions}
\author{Robin Cockett\institute{Department of Computer Science, University of Calgary, Alberta, Canada}\institute{Institute for Quantum Science and Technology, University of Calgary, Alberta, Canada}\and Isabelle Jianing Geng \qquad\qquad Carlo Maria Scandolo\institute{Department of Mathematics \& Statistics, University of Calgary, Alberta, Canada}\institute{Institute for Quantum Science and Technology, University of Calgary, Alberta, Canada}\and Priyaa Varshinee Srinivasan\institute{Department of Computer Science, University of Calgary, Alberta, Canada}\institute{National Institute of Standards and Technology, Maryland, USA}}
\def\titlerunning{Extending resource monotones using Kan Extensions}
\def\authorrunning{R.\ Cockett, I.\ J.\ Geng, C.\ M.\ Scandolo \& P.\ V.\ Srinivasan}

\maketitle 
\begin{abstract}
In this paper we generalize the framework proposed by Gour and Tomamichel regarding extensions of monotones for resource theories. A monotone for a resource theory assigns a real number to each resource in the theory signifying the utility or the value of the resource.  Gour and Tomamichel studied the problem of extending monotones using set-theoretical framework when a resource theory embeds fully and faithfully into the larger theory. One can generalize the problem of computing monotone extensions to scenarios when there exists a functorial transformation of one resource theory to another instead of just a full and faithful inclusion. In this article, we show that (point-wise) Kan extensions provide a precise categorical framework to describe and compute such extensions of monotones. To set up monontone extensions using Kan extensions, we introduce partitioned categories (pCat) as a framework for resource theories and pCat functors to formalize relationship between resource theories. We describe monotones as pCat functors into $([0,\infty], \leq)$, and describe extending monotones along any pCat functor using Kan extensions. We show how our framework works by applying it to extend entanglement monotones for bipartite pure states to bipartite mixed states, to extend classical divergences to the quantum setting, and to extend a non-uniformity monotone from classical probabilistic theory to quantum theory.

\end{abstract}

\section{Introduction}

Resource theories \cite{Quantum-resource-1,Quantum-resource-2,Gour-review} in physics model systems in which certain operations considered to be `free of cost' among of the set of all operations. For example, placing a glass of chilled
water at room temperature warms up the water to the ambient temperature. In this
context, operations that change the temperature of the water to be in equilibrium with the ambient temperature are considered to be free.   In order to produce a
``resourceful state"  --- for example, a glass of chilled water ---  one
requires non-free transformations, such as a fridge, which consumes electricity. Resource theories have been successfully used to
study, among other examples, thermodynamical systems \cite{delRio,Lostaglio-thermo}, entanglement \cite{Review-entanglement,Dynamical-entanglement}, and coherence \cite{Review-coherence}. 

A central question in the resource-theoretic modelling of systems is:  \textit{given two resources, is there a free transformation to convert one resource into the other?} The answer to this question imposes a preorder on resources which captures their value or usefulness. Intuitively, a resource is more valuable than another if, by possessing the former, we are given access to a larger set of resources including the latter through free transformations. This not a partial order, because their may be different resources that can be converted freely into each other. Such resources are considered equivalent. In this way, we can set up a partial order on the equivalence classes of resources. One way to define such an order is to quantify resources by introducing  monotones, which are a order-preserving maps from the set of all resources into $[0,\infty]$ \cite{Gour-review}. Monotones assign a value to resources that is compatible with the preorder, viz.\ with their usefulness. Monotones often have a physical meaning, such as in the resource theories of quantum thermodynamics \cite{Lostaglio-thermo}, where, for systems at a fixed temperature, free energy is a monotone, and for isolated systems, entropy is the natural monotone.

Given a monotone $M$ for a resource theory which embeds in a larger theory, a natural question to ask is whether the monotone M for the smaller theory can be used to quantify the resources in the larger resource theory. This question arises from the observation that that resources exclusive to the larger theory can possibly be converted to resources contained in the smaller theory, and vice versa. It turns out that one can always compute the optimal upper and lower bound for the value of every resource in the larger theory.  In other words, it is possible to extend the monotone M  to give optimal upper and lower bounds respectively on the value of resources in the larger theory.

In \cite{GoT20}, Gour and Tomamichel presented a set-theoretical framework for extending monotones from a subset of resources to the entire set of resources. A similar construction was also introduced by Gonda and Spekkens in \cite{Gonda}. Given a monotone $M$ over a subset of states, they  compute `minimal' and `maximal' extensions of the monotone to the entire set of states. 
In this article, we show that these extensions are special cases of  more general categorical concepts, called  (point-wise) left and right Kan extensions \cite{Kan58, Bor94, Mac13, Rie17}. 
Kan extensions deal with optimally extending a functor $F: \X \to \A$ along another functor $K: \Y \to \A$ to give two functors:  $\overline{F}_K: \Y \to A$ called the left Kan extension of $F$ along $K$, and $\underline{F}_K: \Y \to  \A$ called the right Kan extension of $F$ along $K$. 
The right Kan extension can be interpreted as the most conservative extension of $F$ along $K$ and the left Kan extension as the most liberal extension of $F$ along $K$.

We first introduce partitioned Categories (pCats) as a framework for resource theories. Partitioned categories are categories with a chosen subcategory of free transformations. The subcategory includes all the objects of the parent category, in other words, the inclusion of the subcategory into the parent category is bijective on objects. Relationships between resource theories are set up as pCat functors. In this article, since we consider monotones which are not necessarily additive, thus we do not demand symmetric monoidal structure on pCats. 

Given a resource theory, the necessary and sufficient conditions for transformations of resources can be encoded as a pCat functor from the resource theory into a preorder. We call such a pCat functor as a preorder collapse. A resource monotone is a preorder collapse into  $([0,\infty], \leq)$. In resource theories, contravariant rather than covariant monotones are encountered more frequently, the reason being if resource $A$ can be transformed to resource $B$ using only free transformation(s), then the value of $A$ is considered to be at least as high as the value of $B$. We refer to a contravariant resource monotone as an op-monotone. The distinction between monotones and op-monotones is important in the computation of monotone extensions. The categorical descriptions of pCats, pCat functors, preorder collapse, monotones are discussed with various running examples in Section~\ref{Sec: categorical framework}. Table~\ref{Table: functors}, we briefly summarizes the functors of resource theories introduced in this article.
\begin{table}
\centering
\begin{tabular}{ |c|c|c| } 
 \hline
 pCat functors & functors which preserve free transformations \\
 \hline
 Preorder collapse & a pCat functor whose codomain category is a preorder \\ 
 \hline
 Monotone & a pCat functor whose codomain category is $([0, \infty], \leq)$ \\ 
 \hline 
 Op-monotone & a pCat functor whose codomain category is $([0, \infty], \geq)$ \\
 \hline
\end{tabular}
'\caption{Functors for resource theories}
\label{Table: functors}
\end{table}

Having set up monotones as pCat functors, optimal extensions of monotones along any pCat functor are given by their left and right Kan extensions. Lemma~\ref{Lemma: extension properties} examines the properties of the monotone extensions thus computed, and prove that such extensions are optimal and monotonic. In Lemma~\ref{Lemma: extension properties ff}, we show that extending a monotone along a full and faithful functor recovers the case described in \cite{GoT20} by Gour and Tomamichel and by Gonda and Spekkens in \cite{Gonda} (therein called yield and cost constructions).  We apply the Kan extension framework for monotones to extend classical divergences to the quantum setting, to extend bipartite pure states entanglement monotone to mixed states, and to extend Shannon entropy as a measure of non-uniformity from classical probabilistic theory to quantum theory. Section~\ref{Sec: monotone extensions} is dedicated to setting up the Kan extension framework for monotones, and to studying the extension properties and its applications. 

{ \bf  Notation:} 
{\em In this paper, we use bold letters $\mathbb{X}$, $\mathbb{Y}$, $\C$ to denote categories. We use uppercase letters to denote both objects in the categories and functors between categories, whose meaning will be clear from the context. Lowercase letters $f, g, h,\pi$ are reserved for maps in the categories. Let $X, Y, Z$ be objects, and let $X\xrightarrow{f}Y$, $Y\xrightarrow{g}Z$ be two arrows, we denote the composition of the two arrows $X\xrightarrow{f}Y\xrightarrow{g}Z$ as $fg$, and similar notations apply for the composition of functors.}

\section{An introduction to Kan extensions}

 Kan extensions \cite{Kan58, Bor94, Mac13, Rie17} are a broadly applicable notion which is quite central to category theory. Indeed, Mac Lane in his book ‘Categories for working Mathematician’ \cite{Mac13} gave the chapter on Kan extensions the title “All concepts are Kan extensions”.  In this section, we provide the definition of Kan extensions and discuss limits and colimits as an example of Kan extensions.
 
\subsection{Left and right Kan extensions}

We first provide the definition Kan extensions of a functor along another functor, and explain the universal properties.

\begin{definition} 
Let $F: \X \to \C$ and $K: \X \to \Y$ be any two functors. 

\begin{enumerate}[(i)]
\item {\bf Right Kan (minimal) extension} of $F$ along $K$ is a functor $\underline{F}_K: \Y \to \C$ with a natural transformation $\underline{\psi}: K \underline{F}_K \Rightarrow F$  which is universal , see Fig.~\ref{Fig: universal extensions}-(a). The right Kan extension is written as $(\underline{F}_K, \underline{\psi})$.

\item {\bf Left Kan (maximal) extension} of $F$ along $K$ is a functor $\overline{F}_K: \Y \to \C$ with a natural transformation $\overline{\psi}: F \Rightarrow K \overline{F}_K $  which is couniversal, see Fig.~\ref{Fig: universal extensions}-(b). The left Kan extension is written as $(\overline{F}_K, \overline{\psi})$.
\end{enumerate}

\end{definition}

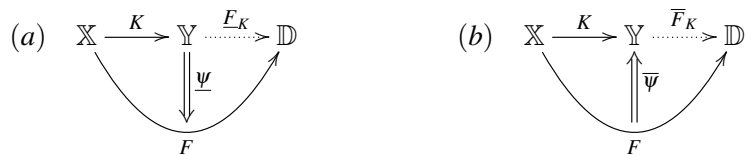
\begin{figure}[h]
\[ (a)~~~  \xymatrix{ \X \ar[r]^{K} \ar@/_3pc/[rr]_F & \Y \ar@{.>}[r]^{\underline{F}_K} \ar@{=>}[d]^{\underline{\psi}}  & \C \\ 
 & & &
  }
   ~~~~~~~~~~ 
 (b)~~~  \xymatrix{ \X \ar[r]^{K} \ar@/_3pc/[rr]_F & \Y \ar@{.>}[r]^{\overline{F}_K} \ar@{<=}[d]^{\overline{\psi}}  & \C \\ 
 & & &
  } \]
 \caption{(a) Right Kan Extension  \qquad \qquad (b) Left Kan Extension} 
 \label{Fig: Kan extensions}
\end{figure}

Fig.~\ref{Fig: Kan extensions} shows the Kan extensions of $F$ along $K$. We refer to the category $\C$ as the {\bf target}, the category $\X$ as the {\bf source categories}. Functor $F$ is extended from its source $\X$  along $K$. Right and Left Kan extensions of $F$ along $K$ are usually written as ${\rm Ran}_K(F)$ and ${\rm Lan}_K(F)$. However, we use the notation introduced in \cite{GoT20} for resource monotone extensions for uniformity.

Let us examine the universal properties of the Kan extensions.
The universal property of right Kan extension assures that for any other functor $H: \Y \to \C$ with a natural transformation $\gamma: KH \Rightarrow F$, there exists a $\gamma': H \to \underline{F}_{K}$ such that $\gamma$ factors through $\underline{\psi}$ via $\gamma'$, that is, $\gamma = (K \otimes \gamma') \underline{\psi}$ (See Fig.~\ref{Fig: universal extensions}-(a)). Informally, this means the right Kan extension of $F$ along $K$ is the most conservative extension and that any other extension $H$ can be transformed to $\underline{F}_K$. In this sense, $\underline{F}_K$ is the minimal extension of $F$ along $K$.

Similarly the couniversal property of left Kan extension assures that for any other functor $H: \Y \to \C$ with a natural transformation $\delta: F \Rightarrow KH$, there exists a $\delta': \overline{F}_{K} \to H$ such that $\delta$ factors through $\overline{\psi}$ via $\delta'$, that is, $\gamma = \overline{\psi} (K \otimes \gamma')$ (See Fig.~\ref{Fig: universal extensions}-(b)). Informally,  this means that $\overline{F}_K$ can be naturally transformed to any other such $H$. In this sense, $\overline{F}_K$ is the maximal extension of $F$ along $K$.

The universal properties of Kan extensions assure that the extensions are optimal.

\begin{figure}[h]
\[ (a) \xymatrix{ &  \ar@<20pt>@{=>}[d]^{\gamma'} &\\
 \X \ar[r]^{K} \ar@/_3pc/[rr]_F & \Y \ar@/^3pc/[r]^{H} \ar@{.>}[r]_{\underline{F}_K} \ar@{=>}[d]^{\underline{\psi}}  & \C \\ 
 & & &
  } \qquad \quad \quad
 (b) \xymatrix{ &  \ar@<20pt>@{<=}[d]^{\delta'} &\\
 \X \ar[r]^{K} \ar@/_3pc/[rr]_F & \Y \ar@/^3pc/[r]^{H} \ar@{.>}[r]_{\overline{F}_K} \ar@{<=}[d]^{\overline{\psi}}  & \C \\ 
 & & &
  } \]
\caption{ \\(a) Right Kan Extension is universal \quad (b) Left Kan Extension is couniversal}  
\label{Fig: universal extensions}
\end{figure}
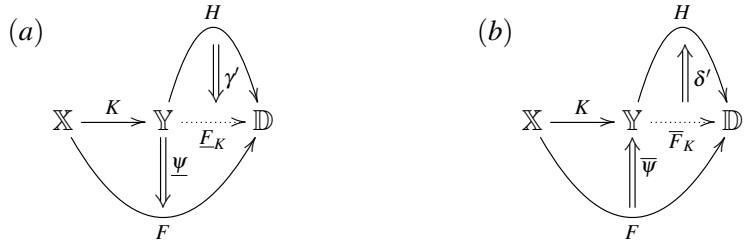

\begin{example}
\label{Eg: Limit example}
The left and right Kan extensions of a functor $F$ along a terminal functor $(!)$ gives precisely the limit and the colimit of $F$. The terminal functor maps all the objects and the maps of the domain category to the single object and the single map in the terminal category ($1$) respectively. Any functor proceeding from the terminal category chooses precisely one object and its identity morphism in the codomain category.

\[ (a)~~~  \xymatrix{ \X \ar@{->}[r]^{!} \ar@/_3pc/[rr]_F & 1 \ar@{.>}[r]^{\underline{F}_!} \ar@{=>}[d]^{\underline{\psi}}  & \C \\ 
 & & &
  }
   ~~~~~~~~~~ 
 (b)~~~  \xymatrix{ \X \ar@{->}[r]^{!} \ar@/_3pc/[rr]_F & 1 
 \ar@{.>}[r]^{\overline{F}_!} \ar@{<=}[d]^{\overline{\psi}}  & \C \\ 
 & & &
  } \]

The {\bf left Kan extension} of $F: \X \to 1$ along the unique functor into $1$ gives a {\bf colimiting cocone}. The functor $\underline{F}_!$ chooses precisely one object in $\C$ (hence we write the object as $\underline{F}_!$) which is the apex of the cocone. The natural transformation $\overline{\psi}$ has components, $\overline{\psi}_X: ! F(X) \Rightarrow \underline{F}_!$ for each $X \in \X$. 

Due to the couniversal property of $\underline{\psi}$, for any other functor $P: 1 \to \C$ with a natural
transformation $\gamma: !P \Rightarrow F$, there
exists a unique natural transformation $\gamma': P
\Rightarrow !\underline{F}_!$ such that $\gamma' \underline{\psi} =
\gamma$. 
Hence, $\underline{F}_!$ is {\bf the limit of
diagram  $F$}. 

\[ 
\xymatrix{
F(A) \ar[dr]^{\overline{\varphi}_A} 
\ar@/_1pc/[ddr]_{\overline{\gamma}_A} 
\ar[rr]^{} & 
  & 
F(B) \ar[ld]_{\overline{\varphi}_B} \ar@/^1pc/[ddl]^{\overline{\gamma}_B}  \\  
& F_! := {\sf colim} F \ar@{.>}[d] & \\ 
& P &
}
\]

Suppose $\C$ is the poset $(\R, \leq)$, then in the above diagram, 
$\underline{F}_!$ is precisely the {\bf greatest lower bound} of $\{ F(A), F(B) \}$. 

Similarly, {\bf the right Kan extension} gives a {\bf limiting cone}. $\overline{F}_!$ is referred to as the {\bf limit of diagram $F$}. When $\C$ is a poset $\overline{F}_!$ is the {\bf least upper bound} of the subset of $\R$ chosen by $F$. 
\end{example}

\subsection{How to compute Kan extensions?}
\label{Sec: Computing Kan extensions}

In Example~\ref{Eg: Limit example}, it was shown that the left and right
Kan extensions of a functor $F$ along the unique functor into the
terminal category are respectively the colimit and the limit  of diagram $F$. 
In this section, we show how one can compute Kan extensions of a functor  when the target category is complete (has all small limits) and cocomplete (has all small colimits) and the intermediate category is locally small (the arrows between any two objects in the category is a small set).

\begin{theorem}\cite[Thoerem 6.2.1]{Rie17}
\label{thm: computing Kan extensions}
Given functors $F: \X \to \C$ and $K: \X \to \Y$, if the category
$\C$ is cocomplete, then the left Kan extension $\overline{F}_K$ 
exists and is defined to be:
\begin{equation}
\label{eqn: left Kan extension}
    \forall Y\in \Y,~~ \overline{F}_{K}(Y) := {\sf colim}( K \downarrow Y \to^{\pi_{K \downarrow Y}} \X \to^{F} \C )
\end{equation}
with the natural transformation $\overline{\psi}$ extracted from  colimiting cocones in $\C$.

If $\D$ is complete, then the right Kan extension $\underline{F}_K$ exists and is defined to be:
\begin{equation}
\label{eqn: right Kan extension}
\forall Y\in \Y,~~ \underline{F}_{K}(Y) := {\sf lim}( Y \downarrow K \to^{\pi_{Y \downarrow K}} \X \to^{F} \C )
\end{equation}
with the natural transformation $\underline{\psi}$ extracted from limiting cones in $\C$.
\end{theorem}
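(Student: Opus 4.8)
The plan is to prove the formula for the left Kan extension $\overline{F}_K$ in full; the right Kan extension $\underline{F}_K$ then follows by the formally dual argument, replacing colimits by limits and the comma categories $K\downarrow Y$ by $Y\downarrow K$ throughout. So fix $Y\in\Y$ and recall that an object of $K\downarrow Y$ is a pair $(X,h)$ with $X\in\X$ and $h\colon K(X)\to Y$ in $\Y$, while a morphism $(X,h)\to(X',h')$ is a map $f\colon X\to X'$ of $\X$ satisfying $K(f)\,h'=h$ (using the paper's diagrammatic composition). The projection $\pi_{K\downarrow Y}$ forgets $h$, and composing with $F$ gives the diagram $F\circ\pi_{K\downarrow Y}\colon K\downarrow Y\to\C$. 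Under the stated size hypotheses the comma category $K\downarrow Y$ is small, so cocompleteness of $\C$ guarantees that the colimit in $(\ref{eqn: left Kan extension})$ exists; I would take $\overline{F}_K(Y)$ to be a chosen colimiting cocone, writing $\kappa^Y_{(X,h)}\colon F(X)\to\overline{F}_K(Y)$ for its legs.

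Next I would promote $\overline{F}_K$ to a functor and build $\overline{\psi}$. For $g\colon Y\to Y'$ there is a functor $K\downarrow Y\to K\downarrow Y'$ sending $(X,h)\mapsto(X,h\,g)$ which commutes with the projections, so $\{\kappa^{Y'}_{(X,h\,g)}\}$ is a cocone under $F\circ\pi_{K\downarrow Y}$ and the universal property of the colimit yields a unique $\overline{F}_K(g)\colon\overline{F}_K(Y)\to\overline{F}_K(Y')$; functoriality in $g$ is forced by the uniqueness clause. The component $\overline{\psi}_X\colon F(X)\to\overline{F}_K(K(X))$ is defined to be the leg $\kappa^{K(X)}_{(X,\mathrm{id}_{K(X)})}$ at the canonical object $(X,\mathrm{id}_{K(X)})\in K\downarrow K(X)$, and naturality of $\overline{\psi}$ in $X$ reduces, via the definition of $\overline{F}_K$ on maps, to the compatibility of the colimit legs.

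The main obstacle is the couniversal property. Given any $H\colon\Y\to\C$ with $\delta\colon F\Rightarrow KH$, I must produce a unique $\delta'\colon\overline{F}_K\Rightarrow H$ with $\overline{\psi}\,(K\ast\delta')=\delta$. The key step is to check that, for each $Y$, the family $\bigl\{\,F(X)\xrightarrow{\delta_X}H(K(X))\xrightarrow{H(h)}H(Y)\,\bigr\}_{(X,h)}$ is a cocone under $F\circ\pi_{K\downarrow Y}$ with apex $H(Y)$. This is exactly where the hypotheses are used: for a comma morphism $f\colon(X,h)\to(X',h')$ one computes $F(f)\,\delta_{X'}\,H(h')=\delta_X\,H(K(f))\,H(h')=\delta_X\,H(K(f)\,h')=\delta_X\,H(h)$, the first equality being naturality of $\delta$ and the third the defining relation $K(f)\,h'=h$. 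The colimit's universal property then delivers a unique $\delta'_Y\colon\overline{F}_K(Y)\to H(Y)$ with $\kappa^Y_{(X,h)}\,\delta'_Y=\delta_X\,H(h)$. I would then verify naturality of $\delta'$ in $Y$ (again by uniqueness of maps out of a colimit), recover the factorization $\overline{\psi}_X\,\delta'_{K(X)}=\kappa^{K(X)}_{(X,\mathrm{id})}\,\delta'_{K(X)}=\delta_X\,H(\mathrm{id})=\delta_X$, and observe that any $\delta'$ satisfying this equation must agree with the induced map on every leg $\kappa^Y_{(X,h)}$, hence is unique.

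The right Kan extension is the evident dual: one works in $Y\downarrow K$, takes $\underline{F}_K(Y)$ to be the limit in $(\ref{eqn: right Kan extension})$ (existing by completeness of the target and smallness of the comma category), extracts $\underline{\psi}$ from the limiting cones at the objects $(\mathrm{id}_{K(X)},X)$, and verifies the universal property by showing the analogous families form cones, the cone condition again resting on naturality of the comparison transformation. I expect no surprises beyond the bookkeeping of the diagrammatic composition convention and a careful statement of the size hypothesis ensuring each comma category is small enough for the relevant (co)limit to exist in $\C$.
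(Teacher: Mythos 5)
Your proposal is correct and follows the same pointwise construction as the paper's (explicitly sketched) proof: define $\overline{F}_K(Y)$ as the colimit of $F$ over the comma category $K \downarrow Y$, induce the action on maps $Y \to Y'$ from the universal property of these colimits, and extract $\overline{\psi}_X$ as the colimit leg at the object $(X, \mathrm{id}_{K(X)})$, with the right Kan extension obtained dually. You in fact go further than the paper, whose proof stops at the construction: you verify the couniversal property (the cocone check via naturality of $\delta$ and the comma relation $K(f)h' = h$, plus the uniqueness argument), and you avoid the paper's slip of calling $(X, 1_{K(X)})$ an ``initial object'' of $K \downarrow K(X)$, which it need not be for a general $K$.
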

\begin{proof}(Sketch)

Suppose $F: \X \to \C$ is any functor and $\C$ is cocomplete. Then, one can compute the left Kan extension $(\underline{F}_K, \overline{\psi})$ of $F$ along any functor $K: \X \to \Y$ as follows:

\begin{description} 

\item[Defining functor $\underline{F}_K: \Y \to \C$:]~

The left Kan extension is computed on each point (object) in $\Y$. 

For each object $Y$ in $\Y$, consider the slice category $(K \downarrow Y)$. The objects in the slice category are pairs $(X, f)$ where,
\[ f: K(X) \to Y \in \Y \]
and a map $m: (X,f) \to (X, f')$ in the slice category is a map $m \in \X$ such that the following triangle commutes:
\[ \xymatrix{ K(X) \ar[rr]^{K(m)} \ar[dr]_{f} & &  K(X') \ar[ld]^{f'} \\
      & Y &
}\]
Stated informally, the slice category contains complete information on how to arrive at an object $Y \in \Y$ using objects and transformations of $\X$. The projection functor $\pi_{K \downarrow Y}$ chooses precisely the subcategory of $\X$ relevant to $Y$, see Figure~\ref{Fig: object extensions}-(a). The left Kan extension on point $Y$ is the colimit of the diagram $F$ applied to this sub-category. The couniversal cocone of the diagram $\pi_{K \downarrow Y} F$ has a natural transformation $\lambda: {\sf Lim }(\pi_{K \downarrow Y} F) \Rightarrow F$, with a component $\lambda_X$ for each object $\pi_{K \downarrow Y}(f, K(X)) := X \in \X$.

The left extension $\overline{F}_K$ is then defined as follows:
\begin{itemize}
\item For all objects $Y \in \Y$, $\underline{F}(Y) := 
{\sf colim}(\pi_{K \downarrow Y} F)$.

\item  For all maps $h: Y \to Y'$, ${\sf colim}(\pi_{K \downarrow Y} F) \to {\sf colim}(\pi_{K \downarrow Y'} F)$ is the unique arrow induced by $h$, see Figure~\ref{Fig: object extensions}-(b).
\end{itemize}

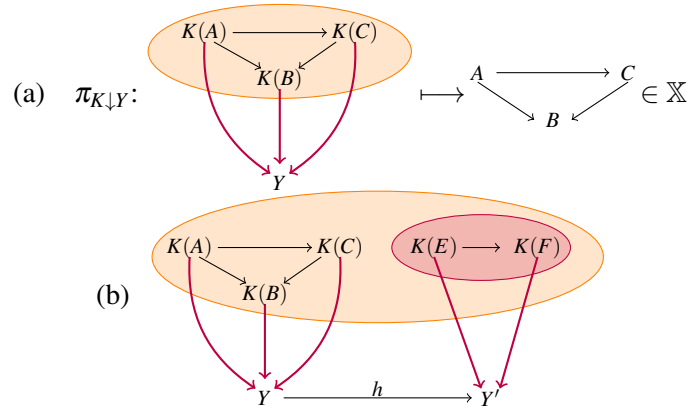
\begin{figure} 
\centering
(a)~~~ $\pi_{K \downarrow Y}$:
\begin{tikzpicture}[scale=1]
	\begin{pgfonlayer}{nodelayer}
		\node [style=none, scale=1.5] (Y) at (-2, 1) {$Y$};
		\node [style=none, scale=1.5] (KA) at (-4, 5) {$K(A)$};
		\node [style=none, scale=1.5] (KB) at (-2, 3.75) {$K(B)$};
		\node [style=none, scale=1.5] (KC) at (0, 5) {$K(C)$};
		\node [style=none] (0) at (-4, 4.75) {};
		\node [style=none] (1) at (0, 4.75) {};
		\node [style=none] (2) at (-2.25, 1.25) {};
		\node [style=none] (3) at (-1.75, 1.25) {};
		\node [style=none] (4) at (-2, 1.5) {};
		\node [style=none] (5) at (-2, 3.5) {};

		\node [style=none] (10) at (-3.75, 4.75) {};
		\node [style=none] (11) at (-0.5, 4.75) {};
		\node [style=none] (15) at (-1.5, 4.05) {};
		\node [style=none] (16) at (-2.5, 4.05) {};
		\node [style=none] (17) at (-3.25, 5) {};
		\node [style=none] (18) at (-0.75, 5) {};
	\end{pgfonlayer}
	\begin{pgfonlayer}{edgelayer}
		\draw [->] (17.center) to (18.center);
		\draw [->] (11.center) to (15.center);
		\draw [->] (10.center) to (16.center);
		\draw[->, color=purple, thick] [bend right] (0.center) to (2.center);
		\draw[->, color=purple, thick] [bend left] (1.center) to (3.center);
		\draw[->, color=purple, thick]  (5.center) to (4.center);
	\end{pgfonlayer}
	\begin{pgfonlayer}{backgroundlayer}
		\node [ellipse, draw, minimum height = 2.5cm, minimum width=7cm, fill=orange, fill opacity=0.2, color=orange] (e) at (-2, 4.5) {};
	\end{pgfonlayer}
\end{tikzpicture} $\longmapsto$ \begin{tikzpicture}[scale=1]
	\begin{pgfonlayer}{nodelayer}
		\node [style=none, scale=1.5] (A) at (-4, 5) {$A$};
		\node [style=none, scale=1.5] (B) at (-2, 3.75) {$B$};
		\node [style=none, scale=1.5] (C) at (0, 5) {$C$};
		\node [style=none] (0) at (-4, 4.75) {};
		\node [style=none] (1) at (0, 4.75) {};
		\node [style=none] (5) at (-1.5, 3.75) {};
		\node [style=none] (6) at (-2.5, 3.75) {};
		\node [style=none] (7) at (-3.5, 5) {};
		\node [style=none] (8) at (-0.5, 5) {};
	\end{pgfonlayer}
	\begin{pgfonlayer}{edgelayer}
		\draw [->] (7.center) to (8.center);
		\draw [->] (1.center) to (5.center);
		\draw [->] (0.center) to (6.center);
	\end{pgfonlayer}
\end{tikzpicture} $\in \X$
~~~~~~~~(b)~~ \begin{tikzpicture}[scale=1]
	\begin{pgfonlayer}{nodelayer}
		\node [style=none, scale=1.5] (Y) at (-2, 1) {$Y$};
		\node [style=none, scale=1.5] (KA) at (-4, 5) {$K(A)$};
		\node [style=none, scale=1.5] (KB) at (-2, 3.75) {$K(B)$};
		\node [style=none, scale=1.5] (KC) at (0, 5) {$K(C)$};
		\node [style=none] (0) at (-4, 4.75) {};
		\node [style=none] (1) at (0, 4.75) {};
		\node [style=none] (2) at (-2.25, 1.25) {};
		\node [style=none] (3) at (-1.75, 1.25) {};
		\node [style=none] (4) at (-2, 1.5) {};
		\node [style=none] (5) at (-2, 3.5) {};

		\node [style=none] (10) at (-3.75, 4.75) {};
		\node [style=none] (11) at (-0.5, 4.75) {};
		\node [style=none] (15) at (-1.5, 4.05) {};
		\node [style=none] (16) at (-2.5, 4.05) {};
		\node [style=none] (17) at (-3.25, 5) {};
		\node [style=none] (18) at (-0.75, 5) {};
		
		\node [style=none] (15) at (-1.5, 4.05) {};
		\node [style=none] (16) at (-2.5, 4.05) {};
		\node [style=none] (17) at (-3.25, 5) {};
		\node [style=none] (18) at (-0.75, 5) {};
		\node [style=none, scale=1.5] (19) at (4, 1) {$Y'$};
		\node [style=none, scale=1.5] (20) at (2.5, 5) {$K(E)$};
		\node [style=none] (21) at (5.25, 5) {};
		\node [style=none, scale=1.5] (22) at (5.25, 5) {$K(F)$};
		\node [style=none] (23) at (3.25, 5) {};
		\node [style=none] (24) at (4.25, 5) {};
		\node [style=none] (25) at (3.75, 1.25) {};
		\node [style=none] (26) at (4.25, 1.25) {};
		\node [style=none] (27) at (2.5, 4.75) {};
		\node [style=none] (28) at (5.25, 4.75) {};
		\node [style=none] (e2) at (3.75, 5) {};
		\node [style=none] (29) at (-1.5, 1) {};
		\node [style=none] (30) at (3.5, 1) {};
	\end{pgfonlayer}
	\begin{pgfonlayer}{edgelayer}
		\draw [->] (17.center) to (18.center);
		\draw [->] (11.center) to (15.center);
		\draw [->] (10.center) to (16.center);
		\draw[->, color=purple, thick] [bend right] (0.center) to (2.center);
		\draw[->, color=purple, thick] [bend left] (1.center) to (3.center);
		\draw[->, color=purple, thick]  (5.center) to (4.center);
		\draw [->] (23.center) to (24.center);
		\draw [->] (29.center) to (30.center);
		\draw [->, color=purple, thick] (27.center) to (25.center);
		\draw [->, color=purple, thick] (28.center) to (26.center);
		\node [style=none, scale=1.5] (31) at (1, 1.25) {$h$};
	\end{pgfonlayer}
	\begin{pgfonlayer}{backgroundlayer}
		\node [ellipse, draw, minimum height = 3.5cm, minimum width=12cm, fill=orange, fill opacity=0.2, color=orange] (e) at (1, 4.75) {};
		\node [ellipse, draw, minimum height = 1.75cm, minimum width=4.75cm, color=purple, fill=purple, fill opacity=0.2] (e2) at (3.75,5) {};
	\end{pgfonlayer}
\end{tikzpicture}
\caption{ (a) $\pi_{K \downarrow Y}$ projects the shaded region of $(K \downarrow Y)$ into $\X$ (in general $K \downarrow Y$ is not a subcategory of $\X$); (b) An arrow $h: Y \rightarrow Y' \in \Y$ leads to the (shaded) base of above $Y'$ to be included in the (shaded) base above $Y$. Hence the base of the colimiting cocone of $\pi_{K \downarrow Y}F$ includes the shaded base of the colimiting cone of $\pi_{K \downarrow Y}F$ inducing a unique map ${\sf colim}(\pi_{K \downarrow Y}F) \rightarrow {\sf colim}(\pi_{K \downarrow Y'}F)$}
\label{Fig: object extensions}
\end{figure}

\item[Defining the natural transformation $\overline{\psi}: F \Rightarrow K \overline{F}_K$ :]~

For all $X \in \X$, $\overline{\psi}_X$ is the component ${\sf lim}(\pi_{K \downarrow KX}F) \to F(X)$ of the colimiting cocone corresponding to the initial object $(1_KX, KX) \in (K \downarrow KX)$.


\end{description}

Computing right Kan extension is dual to computing left Kan extensions. 
If $F: \X \to \C$ is any functor and $\C$ is complete (contains all small limits), then one can compute the right Kan 
extension $\underline{F}_K, \underline{\psi})$ 
of $F$ along any functor $K: \X \to \Y$ 
by replacing the slice construction by the coslice construction, and colimits by limits in the above procedure.
\end{proof}

\begin{corollary} 
\label{Lem: full and faithful}
 If $(\underline{F}_K, \underline{\psi})$ is the right Kan extension of a functor $F$ along any {\bf full and faithful} functor $K$, then the natural transformation $\underline{\psi}$ is an isomorphism.
 
 Similarly, if $(\overline{F}_K, \overline{\psi})$ is the left Kan extension of a functor $F$ along any {\bf full and faithful} functor $K$, then the natural transformation $\overline{\psi}$ is an isomorphism.
\end{corollary}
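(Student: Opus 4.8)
The plan is to read off both comparison natural transformations from the explicit pointwise formulas in Theorem~\ref{thm: computing Kan extensions}, and to observe that full-faithfulness of $K$ makes each component of these transformations the leg of a (co)limiting (co)cone at a distinguished object of the relevant comma category, from which the isomorphism is immediate. Since a natural transformation is an isomorphism exactly when each of its components is, it suffices to treat the components $\underline{\psi}_X \colon \underline{F}_K(KX) \to F(X)$ (and dually $\overline{\psi}_X \colon F(X) \to \overline{F}_K(KX)$) one object $X \in \X$ at a time.

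For the right Kan extension I would first recall from the theorem that $\underline{F}_K(KX) = {\sf lim}\bigl( (KX \downarrow K) \xrightarrow{\pi_{KX \downarrow K}} \X \xrightarrow{F} \C \bigr)$, and that $\underline{\psi}_X$ is precisely the leg of this limiting cone indexed by the comma-category object $(X,\id_{KX})$, whose image under $\pi_{KX\downarrow K}F$ is $F(X)$. So the whole problem reduces to understanding the position of $(X,\id_{KX})$ inside $(KX \downarrow K)$.

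The key step, and the only place full-faithfulness enters, is to show that $(X, \id_{KX})$ is an \emph{initial} object of $(KX \downarrow K)$. An object there is a pair $(X', f\colon KX \to KX')$, and a morphism $(X, \id_{KX}) \to (X', f)$ is a map $m \colon X \to X'$ in $\X$ with $K(m) = f$; fullness of $K$ produces such an $m$ for every $f$, and faithfulness makes it unique, so there is exactly one morphism from $(X, \id_{KX})$ to each object. I would then invoke the standard fact that a limit indexed by a category with an initial object $0$ is canonically the value of the diagram at $0$, with the limiting leg at $0$ equal to the identity. Applied here this gives $\underline{F}_K(KX) \cong F(X)$ with $\underline{\psi}_X$ an isomorphism; as $X$ was arbitrary, $\underline{\psi}$ is a natural isomorphism. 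The left Kan extension case is formally dual: now $\overline{F}_K(KX) = {\sf colim}\bigl( (K \downarrow KX) \xrightarrow{\pi_{K\downarrow KX}} \X \xrightarrow{F} \C \bigr)$, the same full-faithfulness argument (with arrows now pointing into $(X,\id_{KX})$) shows $(X,\id_{KX})$ is \emph{terminal} in $(K \downarrow KX)$, a colimit over a category with a terminal object is the value there, and hence $\overline{\psi}_X$ is an isomorphism for every $X$.

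I expect the only real pitfall to be bookkeeping the variance: one must be careful that for the limit computing the right Kan extension the distinguished object $(X,\id_{KX})$ is \emph{initial}, whereas for the colimit computing the left Kan extension it is \emph{terminal}, and that in each case the direction of the comparison leg matches the formula of Theorem~\ref{thm: computing Kan extensions}. Everything else — the existence and uniqueness of the mediating map from fullness and faithfulness, and the reduction of a (co)limit to the value at the (initial) terminal object — is routine.
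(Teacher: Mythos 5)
Your proposal is correct and takes essentially the same route as the paper's own proof: both invoke the pointwise formula of Theorem~\ref{thm: computing Kan extensions} and use full-faithfulness of $K$ to show that $(X,\id_{KX})$ is initial in the coslice $(KX \downarrow K)$ (dually, terminal in the slice $(K \downarrow KX)$), so the (co)limit collapses to $F(X)$ and the comparison component is an isomorphism. If anything, your write-up is more careful than the paper's sketch, since you explicitly identify $\underline{\psi}_X$ as the limiting leg at the distinguished object and separate the roles of fullness (existence) and faithfulness (uniqueness) of the mediating morphism.
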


\begin{proof}

Note that for all $X \in \X$, $ K\underline{F}(X) = \underline{F}(K(X)) :=  {\sf lim} (\pi_{K(X) \downarrow K}) $

Since $K$ is full and faithful, every $K(f): K(X) \to K(X') \in \Y$ corresponds to a unique $f: X \to X' \in \X$. Then for all $X \in \X$, $(KX, 1_{KX})$ is an initial object in the coslice category $(KX \downarrow K)$. Thereby, the diagram $\pi_{KX \downarrow K}$ contains all the maps radiating from $X$. Hence, ${\sf lim} (\pi_{KX \downarrow K}F) = F(X)$, thereby, $\underline{\psi}$ is an isomorphism. 

The argument for the left Kan extension is dual to the above proof.
\end{proof}

We use this procedure to compute extensions of resource monotones which  are functors into a posetal category (a poset considered as a category), see Section~\ref{Sec: monotone extensions}.

\section{Kan extensions of Resource Measures}

\subsection{Resource Theories as partitioned Categories}
\label{Sec: categorical framework}

We introduce partitioned Categories as a framework for resource theories, and functors for partitioned Categories to describe relationships between resource theories. 

\begin{definition} 
A {\bf partitioned category (pCat)} $(\X, \X_\f)$ consists of a category $\X$ and a chosen subcategory $\X_\f$ of free transformations with the inclusion being bijective on objects. 
\end{definition}

The objects of the category are interpreted as {\bf resources} and the maps to be
{\bf resource transformations}. The subcategory includes all objects and those 
transformations which are designated to be {\bf free}. 

The following are a few examples of resource theories as pCats:
\subsubsection*{Randomness}

Cryptographic protocols use randomness as an essential resource for establishing secure communication of devices by generating random keys. 
The degree of randomness determines how secure the communication channel is. 
Randomness is also used in computer algorithms to solve certain problems. In other words, randomness is an essential computational resource of practical use. 
Entropy is used as measure of randomness: in particular, Shannon entropy quantifies randomness in that it expresses the average surprisal on the outcome of a random experiment. Entropy has been studied in the context of randomness using the category ${\sf FinProb}$ (renamed below as ${\sf Detmn}$)  \cite{BFL11} and \cite[Example 2.5]{CFS16}. The following is a resource theory of randomness:

\begin{example}
\label{eg: RandDetmn}
{$\mathsf{(Rand, Detmn)}$:} 

($\mathsf{Detmn}$ is the chosen sub-category of free transformations in $\Rand$)

\begin{description} 
\item[Resources:] $(X,p)$ where $X$ is a finite set and $p$ is a probability distribution over $X$. 

$X$ can be interpreted of as a set of possible states of a system and $p$ be the probability distribution over the states.
\item[Resource Transformations:] $M: (X, p) \to (Y, q)$ is a real $|X| \times |Y|$ row stochastic matrix (rows sum to 1) such that $pM=q$. 

A resource transformation $M: (X, p) \to (Y, q)$ is row stochastic if and only if for all $x \in X$, $M_x$ is a probability distribution: suppose the system is in state x, then the stochastic process produces states $y\in Y$ with probability $M_{xy}$.The requirement that $pM = q$ means that under the stochastic process $M$, the probability of $Y$ being in state $y$ after process $M$ on $X$ is given by $\sum_{x \in X} M_{xy} p_x$.

\item[Identity transformations:] Identity matrices
\item[Composition:] Suppose $(X,p) \xrightarrow{M} (Y,q) \xrightarrow{N} (Z,s)$, then $(X,p) \xrightarrow{MN} (Z,s)$ is defined as the matrix multiplication
\item[Free transformations:] A resource transformation $(X,p) \xrightarrow{M} (Y,q)$ is free if it is deterministic, that is, $M$ is simply a function $X \to Y$. Hence, for all $x \in X, y \in Y$, $M_{xy} \in \{0, 1\}$
\end{description}
\end{example}

\subsubsection*{Non-uniformity}
{\em Pure states} represent states on which the experimenter has maximum information. These conditions are often very hard to achieve in concrete settings due to the presence of external noise. In such cases, the state is called mixed, and can be expressed as a convex combination of pure states. 
From this perspective, it is clear that pure states represent the maximal resource and the closer a state is to a pure state, the more resourceful it is. Therefore, the least resourceful state of any system is the \emph{maximally mixed state}, which can be expressed as a uniform probability distribution over the states of the system.
\cite{GMV15} 

\begin{example}
\label{eg: RandUni}
{$\mathsf{(Rand, Uniform)}$:}
\begin{description} 
\item[Resources, transformations, identity and composition:] Same as example~\ref{eg: RandDetmn}
\item[Free transformation:] A map $(X,p) \xrightarrow{U} (Y,q)$ is free if $U$ is a uniform matrix. A row stochastic matrix $(X,p) \to^{M} (Y,q)$ is uniform if for all $y \in Y$,
\[ \sum_{x \in X} M_{xy} = 1 \]
The columns of $M$ sum to $|X| / |Y|$. When $U$ is a square matrix, it is doubly stochastic.
\end{description}
\end{example}

Note that, a uniform probability distribution $u := (1/n, 1/n, 1/n, \cdots, 1/n)$ is simply the uniform matrix $(\{*\}, (1)) \to (X,u)$, which is $u$ itself.

$({\sf Rand, Uniform})$ consists of classical probabilistic states. 
A non-uniformity theory based on quantum states is as follows:

\begin{example}$(\qRand, \qUniform)$
\label{eg: qRandUni}
\begin{description}
\item[Resources:] $(\rho, H)$ where $\rho: H \to H \in L(H)$ is a quantum state, also known as density matrix (a positive semi-definite operator with trace $1$), and $H$ is a finite-dimensional Hilbert space. 
\item[Resource transformations:] $(\rho, H) \to^{\mathcal{E}} (\sigma, K)$ is a quantum channel $\mathcal{E}: L(H) \to L(K)$ such that 

$\mathcal{E}(\rho) = \sigma$
\item[Composition and Identity transformations:] Usual composition of quantum channels and identity channels
\item[Free transformations:] Unital quantum channels i.e., $\chan{E}: L(H) \to L(K)$ such that $\chan{E}\left(\maxmix{H} \right) =  \maxmix{K}$, where $\mathds{1}_H \in L(H)$ is the identity matrix. In other words, unital channels preserve maximally mixed states.
\end{description}
\end{example}

\subsubsection*{Entanglement}

Entanglement is one of the most important quantum resources, and it is used in several communication scenarios, such as quantum teleportation \cite{Teleportation} or dense coding \cite{Dense-coding}. It is known that local operations and classical communication (LOCC) cannot increase the entanglement of a quantum state~\cite{Review-entanglement}. Hence, when entanglement is considered to be a resource, LOCC operations  are precisely the free transformations of this resource theory. The basic setting in which entanglement is studied involves quantum states over two systems, which is referred to as ``bipartite entanglement''.

A resource theory of bipartite entanglement is constructed as follows. The following resource theory is obtained by applying the coslice (state) construction on \cite[Example 3.7]{CFS16}:

\begin{example} \label{eg: Bip} ${\sf (Bip, LOCC) }$:
\begin{description} 
\item[Resources:] $\rho \in L(H \ox K)$ is a quantum state which is a positive semi-definite operator with trace $1$, and $H$, $K$ are finite-dimensional Hilbert spaces. 
\item[Resource transformations:]
$\rho \to^{\mathcal{E}} \sigma$ is a quantum channel (completely positive trace preserving map) such that $\mathcal{E}(\rho) = \sigma$

\item[Free transformations:] Local operations and classical communication
\end{description}
The composition is the usual composition of identity channels.
\end{example}


\subsubsection*{Distinguishability} 
In some situations it is important to consider pairs of quantum states and evaluate how different they are from each other. To this end, various quantifiers have been defined, such as the trace distance, the fidelity \cite{Wilde} or quantum divergences \cite{GoT20,Gou21}. These quantifiers all show that, whenever the same channel is applied to each element of a pair of quantum states, in general our ability to distinguish the resulting states is decreased. This suggests setting up a resource theory of the distinguishability, also known as quantum relative majorization \cite{rel-sub-maj,quant-rel-lor-curv}.

A resource theory of quantum distinguishability is given as follows \cite{GoT20,Gou21}.

\begin{example}
\label{eg: Divergence}{\sf (Distinguish, Processing)}:

\begin{description} 
\item[Resources]: $((\rho,\sigma), H)$ are pairs of quantum states, that is, $\rho , \sigma \in L(H)$ where $H$  is a finite-dimensional Hilbert space.

\item[Resource transformations]: $(\mathcal{E}_1, \mathcal{E}_2):
((\rho_H, \sigma_H),H) \to ((\rho_K, \sigma_K), K)$ are pairs of
quantum channels $\mathcal{E}_1, \mathcal{E}_2: L(H) \to L(K)$ such
that $\mathcal{E}_1(\rho_H) = \rho_K$ and $\mathcal{E}_2(\sigma_H) =
\sigma_K$

\item[Composition and identity transformations]: $(\mathcal{E}_1,
\mathcal{E}_2)(\mathcal{E}_3, \mathcal{E}_4) := (\mathcal{E}_1
\mathcal{E}_3, \mathcal{E}_2 \mathcal{E}_4)$ and identity
transformations are given by pairs of identity channels
\item[Free transformations:] $(\mathcal{E}_1, \mathcal{E}_2)$ such that $\mathcal{E}_1 = \mathcal{E}_2$
\end{description}
\end{example}

\subsection{Relationships between Resource Theories as pCat functors}

Now that we have formalized resource theories as pCats, we can formalize the relationship between resource theories as functors of pCats. For example, classical theories of the corresponding quantum resource theories. Physical theories defined on pure states are considered as subtheories of corresponding mixed state theories. Such relationships  can be formalized as functors of pCats. 

\begin{definition}
A {\bf functor of partitioned categories (pCat)}, $F: (\X,\X_\f) \to (\Y, \Y_\f)$, is a functor $F: \X \to \Y$ such that if $f \in \X_\f$ then $F(f) \in \Y_\f$ i.e., the functor preserves free transformations.
\end{definition}

$F: (\X, \X_{\sf f})$ being a functor means that it preserves the identity transformations: $F(1_A) = 1_{F(A)}$, and it preserves the composition in $\X$: $F(fg) = F(f) F(g)$.  

 Figure~\ref{Fig: pSMC functor} is a schematic of a pCat functor. The triangles represent non-free transformations, and the hollow circles represent free transformations. As one can see, a pCat functor may or may not preserve a non-free transformation. 

\begin{figure}[h]
 \centering
 \includegraphics[width=0.4\textwidth]{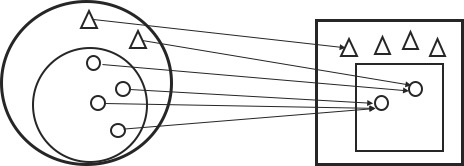} 
 \caption{Schematic for functor of pCats}
 \label{Fig: pSMC functor}
\end{figure}

\begin{definition}
A pCat functor $F:(\X,\X_\f) \to (\Y, \Y_\f)$ is {\bf full} if $F: \X \to \Y$ is full, and $F$ is {\bf faithful} if $F: \X \to \Y$ is faithful.
\end{definition}

Let us look at a few examples of full and faithful pCat functors. For a quantum system, pure states are considered to be a subset of mixed states since mixed states are convex combination of pure states. 
The resource theory of pure states bipartite entanglement embeds in the general theory of bipartitle entanglement through an inclusion functor, see Example~\ref{eg: Bip inclusion}. Lemma~\ref{lem: Bip inclusion} proves that this inclusion is a pCat functor.

\begin{example}
\label{eg: Bip inclusion}
The resource theory of bipartite pure-state  entanglement, ${\sf (PureBip, LOCC_p)}$ has pure quantum states 
$\rho$, where $\rho \in L(H \ox K)$, as resources. A resource transformation is a
quantum channel $\mathcal{E}: \rho \to \sigma$ such that $\mathcal{E}(\rho) = \sigma$ where
$\sigma \in L(H' \ox K')$. The free transformations are LOCC operations between
pure bipartite states, here denoted as ${\sf LOCC_p}$.
\end{example}

\begin{lemma} 
\label{lem: Bip inclusion}
The inclusion $i: {\sf PureBip} \hookrightarrow  {\sf Bip}$ defined to be identity on objects and maps is a full and faithful pCat functor $i: {\sf (PureBip, LOCC_p)} \hookrightarrow ({\sf Bip, LOCC})$. 
\end{lemma}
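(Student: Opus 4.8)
The statement asserts that a specific, simply-defined map---the identity-on-objects-and-maps inclusion $i \colon {\sf PureBip} \hookrightarrow {\sf Bip}$---is a pCat functor that is moreover full and faithful. Since the inclusion sends each pure bipartite state $\rho \in L(H \ox K)$ to itself (viewed now as a, possibly non-pure, state in ${\sf Bip}$) and each LOCC$_p$ channel to the same channel regarded in ${\sf Bip}$, almost everything reduces to unwinding definitions. So the plan is to verify, in order: (1) $i$ is a well-defined functor $\X \to \Y$; (2) $i$ preserves free transformations, so it is a pCat functor; (3) $i$ is faithful; (4) $i$ is full.

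\textbf{Functoriality and preservation of free maps.} First I would check that $i$ genuinely lands in ${\sf Bip}$: every object of ${\sf PureBip}$ is a pure state $\rho \in L(H \ox K)$, and such a $\rho$ is in particular a positive semi-definite trace-one operator, hence a legitimate object of ${\sf Bip}$ (Example~\ref{eg: Bip}); similarly every resource transformation of ${\sf PureBip}$ is by definition a CPTP map $\mathcal{E}$ with $\mathcal{E}(\rho)=\sigma$, which is exactly the data of a ${\sf Bip}$-morphism. Because $i$ is literally the identity on maps, functoriality is immediate: $i(1_\rho)=1_\rho=1_{i(\rho)}$ and $i(\mathcal{E}\mathcal{E}') = \mathcal{E}\mathcal{E}' = i(\mathcal{E})i(\mathcal{E}')$, as composition in both categories is ordinary channel composition. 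For the pCat condition I would note that a free map of ${\sf PureBip}$ is an element of ${\sf LOCC_p}$, i.e.\ an LOCC operation between pure bipartite states; every such operation is in particular an LOCC operation, hence lies in ${\sf LOCC}$, the free subcategory of ${\sf Bip}$. Thus $f \in \X_\f \Rightarrow i(f) \in \Y_\f$, establishing that $i$ is a pCat functor.

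\textbf{Full and faithful.} Faithfulness is the assertion that the underlying functor $i \colon \X \to \Y$ is injective on each hom-set, which is trivial here since $i$ acts as the identity on morphisms: distinct channels stay distinct. For fullness I must show that for pure states $\rho, \sigma$, every ${\sf Bip}$-morphism $\rho \to \sigma$---that is, every CPTP map $\mathcal{E}$ with $\mathcal{E}(\rho)=\sigma$---already arises as $i$ of a ${\sf PureBip}$-morphism, i.e.\ is itself a morphism of ${\sf PureBip}$. This is exactly where I would examine the definition of ${\sf PureBip}$'s transformations in Example~\ref{eg: Bip inclusion}: there a resource transformation between pure states is declared to be \emph{any} quantum channel $\mathcal{E}$ with $\mathcal{E}(\rho)=\sigma$, so the hom-sets of ${\sf PureBip}$ and ${\sf Bip}$ on pure objects literally coincide, giving fullness at once.

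\textbf{Main obstacle.} The content is genuinely light because the categories are arranged so that ${\sf PureBip}$ is a full-on-objects, full-on-maps restriction of ${\sf Bip}$ to pure states. The only point requiring care---and the place I expect a referee to probe---is whether fullness really holds: it does only because the morphisms of ${\sf PureBip}$ are defined to be \emph{all} CPTP maps (not merely those preserving purity) sending one pure state to another, so that a channel out of a pure state is permitted to produce a mixed output at an intermediate or final object while still counting as a ${\sf PureBip}$ arrow between the designated pure endpoints. I would make this definitional alignment explicit, since it is the substantive observation underlying an otherwise routine verification; everything else is bookkeeping.
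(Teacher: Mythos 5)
Your proof is correct and follows essentially the same route as the paper's (much terser) proof: the pCat condition holds because ${\sf LOCC_p}$ sits inside ${\sf LOCC}$, and fullness and faithfulness hold because $i$ is the identity on morphisms and the hom-sets of ${\sf PureBip}$ and ${\sf Bip}$ between pure states coincide by definition. Your explicit remark that fullness rests on ${\sf PureBip}$'s morphisms being \emph{all} CPTP maps between pure endpoints (not just purity-preserving ones) is exactly the definitional point the paper leaves implicit.
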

\begin{proof} 
The inclusion is a pCat functor since ${\sf LOCC_p} \hookrightarrow {\sf LOCC}$. Moreover, $i: {\sf PureBip} \hookrightarrow  {\sf Bip}$ is full and faithful inclusion.
\end{proof}

Classical theories are considered as sub-theories of quantum theories. This gives an inclusion functor classical distinguishability into quantum distinguishability. The following is the resource theory for classical distinguishability and is referred to as classical relative majorization in \cite{mix-cha, mix-dis, gen-har,rel-sub-maj,quant-rel-lor-curv}:

\begin{example}
\label{eg: cDistinsguish}
In the resource theory of classical distinguishability, ${\sf (cDistinguish, cProcessing)}$, a
resource $((p,q), X)$ is a pair of probability distributions $p:= (p_1, \cdots, p_{|X|})$ and $q := (q_1, \cdots, q_{|X|})$ over a finite set $X$. Resource
transformations $(M, M'):((p,q), X) \to ((p',q'), Y)$ where $M$ and $M'$ are pairs of row stochastic matrices such that $pM = p'$ and $q M' = q'$. Free transformations are $(M, M')$ such that $M = M'$. 
\end{example}

\begin{example} 
\label{eg: Div inclusion}
The  inclusion $i: {\cDivergence} \hookrightarrow {\Divergence}$ is defined as follows: 

\begin{itemize} 
\item For all resources $((p,q), X) \in {\cDivergence}$, $i((p,q), X) := ((\rho^p,\rho^q), \D^{|X|})$ where $[\rho^p]_{ij} = \delta_{ij} p_i$ , $1 \leq i \leq |X|, 1 \leq j \leq |X|$. $\rho^p$ and $\rho^q$ are diagonal density matrices with the probability distributions $p$ and $q$ as their diagonals respectively.

\item Given a transformation $(M, M'): ((p,q), X) \to ((p', q'),Y)$, $i((M, M')) := (\mathcal{E}, \mathcal{E}')$ where $\mathcal{E}$ and $\mathcal{E'}$ are determined by $M$ and $M'$ respectively as follows.

For any quantum state (positive semi-definite operator of trace 1 on a Hilbert Space $H$), 

\begin{equation}
\label{eqn: stochastic to quantum channel}
\mathcal{E}(\rho)=\sum_{i,j}B_{ij}\rho B_{ij}^{\dagger}
\end{equation}

where $B_{ij}=\sqrt{M_{ij}}|j\rangle\langle i|$ where $1 \leq i, j \leq |X|$ and $B_{ij}^\dagger$ is its adjoint (cf.\ \cite{Wilde}). $(|j\rangle$ is a column vector with $1$ at position $j$ and zero elsewhere.$)$
\end{itemize}
\end{example}

\begin{lemma} 
\label{Lemma: Lemma: Div inclusion ff}
The inclusion $i: {\cDivergence} \hookrightarrow {\Divergence}$ defined in Example~\ref{eg: Div inclusion} is full and faithful (or fully faithful) pCat functor $i: { (\cDivergence, \cProcessing)} \hookrightarrow (\Divergence, \Processing)$. 
\end{lemma}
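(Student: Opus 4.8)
The plan is to unpack ``full and faithful pCat functor'' into four checks: that $i$ respects identities and composition, that it sends free maps to free maps, that it is faithful, and that it is full. Everything runs off one computation, the action of $\mathcal{E}_M$ on a diagonal basis element: from $B_{ij}=\sqrt{M_{ij}}\,|j\rangle\langle i|$ one reads $\mathcal{E}_M(|i\rangle\langle i|)=\sum_{j}M_{ij}\,|j\rangle\langle j|$. Three facts follow at once. First, $\sum_{ij}B_{ij}^{\dagger}B_{ij}=\sum_i(\sum_j M_{ij})|i\rangle\langle i|=\sum_i|i\rangle\langle i|$ since $M$ is row stochastic, so each $\mathcal{E}_M$ is completely positive and trace preserving, and by linearity $\mathcal{E}_M(\rho^p)=\sum_j(pM)_j|j\rangle\langle j|=\rho^{p'}$; hence $i(M,M')=(\mathcal{E}_M,\mathcal{E}_{M'})$ really is a morphism $iA\to iB$ of $\Divergence$, so $i$ of Example~\ref{eg: Div inclusion} is well defined. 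Second, evaluating $\mathcal{E}_M$ then $\mathcal{E}_N$ on $|i\rangle\langle i|$ yields $\sum_l(MN)_{il}|l\rangle\langle l|$, and as these channels depend only on the diagonal of the input and return diagonal output, $\mathcal{E}_M\mathcal{E}_N=\mathcal{E}_{MN}$, which is exactly $i$ applied to the composite $(MN,M'N')$. Third, a free classical map has $M=M'$, so its image $(\mathcal{E}_M,\mathcal{E}_M)$ has equal components and is free in $\Divergence$.

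For faithfulness I would run the same workhorse in reverse: reading the $j$-th diagonal entry of $\mathcal{E}_M(|i\rangle\langle i|)$ recovers $M_{ij}$, so $\mathcal{E}_M$ determines $M$, and applying this componentwise shows $(M,M')\mapsto(\mathcal{E}_M,\mathcal{E}_{M'})$ is injective on each hom-set. (If one instead reads these resource theories up to the action of transformations on their source states, both categories are thin and faithfulness is automatic; either way the substantive work lies in fullness.)

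Fullness is the real content and the step I expect to be the main obstacle. Given $(\mathcal{F}_1,\mathcal{F}_2)\colon iA\to iB$, i.e.\ channels with $\mathcal{F}_1(\rho^p)=\rho^{p'}$ and $\mathcal{F}_2(\rho^q)=\rho^{q'}$, I would extract a classical preimage by the pinching construction $M_{ij}:=\langle j|\,\mathcal{F}_1(|i\rangle\langle i|)\,|j\rangle$, and $M'$ from $\mathcal{F}_2$. Since $\mathcal{F}_1(|i\rangle\langle i|)$ is a density matrix, $M$ is nonnegative with $\sum_j M_{ij}=\mathrm{tr}\,\mathcal{F}_1(|i\rangle\langle i|)=1$, so $M$ is row stochastic, and $(pM)_j=\langle j|\mathcal{F}_1(\rho^p)|j\rangle=\langle j|\rho^{p'}|j\rangle=p'_j$, so $pM=p'$ and $(M,M')$ is a genuine morphism of $\cDivergence$. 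The delicate point is that $i(M,M')=(\mathcal{E}_M,\mathcal{E}_{M'})$ need \emph{not} equal $(\mathcal{F}_1,\mathcal{F}_2)$ as maps on all of $L(\D^{|X|})$, because $\mathcal{E}_M$ first dephases in the computational basis whereas $\mathcal{F}_1$ need not; the same phenomenon already appears for identities, where $\mathcal{E}_I$ (for the identity matrix $I$) is the completely dephasing channel rather than the identity channel. Thus both fullness and the identity law hold only once the hom-set equality convention is pinned down: under the resource-theoretic convention, intrinsic to the coslice/state construction these theories come from, that two transformations are identified when they agree on the designated source resources, $\mathcal{E}_M$ and $\mathcal{F}_1$ name the same morphism and $\mathcal{E}_I$ agrees with the identity on $\rho^p$. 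Making this convention explicit, and confirming it is the one under which the faithfulness argument above still separates distinct classical matrices, is the part of the proof that I expect to require the most care.
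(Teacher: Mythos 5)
The paper contains no proof of this lemma at all --- it is stated and then used --- so there is no official argument to compare yours with; judged on its own merits, your proposal is correct in every computation and considerably more careful than the paper. The CPTP check, the identity $\mathcal{E}_M(\rho^p)=\rho^{pM}$, the composition law $\mathcal{E}_M\mathcal{E}_N=\mathcal{E}_{MN}$ (legitimately upgraded from diagonal inputs to all inputs because these channels depend only on the diagonal), preservation of freeness, faithfulness via $M_{ij}=\langle j|\mathcal{E}_M(|i\rangle\langle i|)|j\rangle$, and the pinching construction for fullness are all sound. Your central observation is also a genuine catch rather than a defect of your route: every channel in the image of $i$ dephases in the computational basis, so for $|X|\geq 2$ the identity-channel pair on $i((p,q),X)$ is a morphism of $\Divergence$ that is not of the form $i(M,M')$, and $i$ applied to the identity matrix pair is the dephasing pair, not the identity pair. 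Read literally, the paper's $i$ neither preserves identities nor is full; the statement only becomes true after morphisms are identified by their action on the designated states, exactly as you say, and under that reading your pinching argument is the entire content of the lemma.

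Two corrections to how you frame the repair. First, the identification is not ``intrinsic to the coslice/state construction'': in a coslice category two morphisms are equal precisely when they are equal in the underlying category, so quotienting by agreement on the source states is an additional choice imposed on the paper's definitions, not a convention they already carry. Second, the quotient is more drastic than your wording suggests: since every morphism $((\rho,\sigma),H)\to((\rho',\sigma'),K)$ sends $\rho\mapsto\rho'$ and $\sigma\mapsto\sigma'$ by definition, identifying morphisms that agree on source states identifies \emph{all} parallel morphisms, so both $\cDivergence$ and $\Divergence$ collapse to preorders, faithfulness becomes vacuous, and ``free'' must be re-read as ``has a representative with equal components''. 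That collapse is harmless for what the paper actually needs --- the extension formulas \eqref{eqn: min rev extension} and \eqref{eqn: max rev extension} consult only the existence of free maps, which is exactly what your pinching argument supplies --- but it should be presented as a repair of the paper's definitions rather than a reading of them.
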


\begin{example}
\label{eg: non-uniformity inclusion}
Closely, related to Example~\ref{eg: Div inclusion}, is the inclusion of $\Rand$ into $\qRand$. The inclusion $i: \Rand \hookrightarrow \qRand$ is defined as follows: for all $(p, X) \in \Rand$, $i((p, X)) := (\rho^p, \mathbb{C}^{|X|})$, and for row stochastic matrices $M \in \Rand$, $i(M)$ is defined as in eqn~\eqref{eqn: stochastic to quantum channel}.
\end{example}

\begin{lemma}
\label{Lemma: non-uniformity inclusion}
The  inclusion $i: \Rand \hookrightarrow \qRand$ defined in Example~\ref{eg: non-uniformity inclusion} is a full and faithful pCat functor $i: { (\Rand, \Uniform)} \hookrightarrow (\qRand, \qUniform)$. 
\end{lemma}

\subsection{Preorder collapse and monotones}

One of the major goals of resource theories is to the identify necessary and sufficient conditions for the existence of a free transformation between two resources. Once such conditions are identified, one can choose to `forget' the different possible ways in which resource $A$ can be converted to resource $B$ freely, and only `remember' if there exists a free transformation from $A$ to $B$. 

The necessary and sufficient conditions for the existence of a free transformation between pairs of resources define an equivalence class on the resource theory (freely inter-convertible resources are considered to be equivalent) and a preorder on the equivalence classes. Such necessary and sufficient conditions can be encoded into a pCat functor from the resource theory. On applying this functor, the resource theory collapses into a preorder:

\begin{definition} Given a resource theory $(\X, \X_\f)$ and a preorder  $({\sf ob}(\X), {\sf order})$ where ${\sf ob}(\X)$ refers to the set of objects of $\X$, a {\bf preorder collapse} of the resource theory $(\X, \X_\f)$ is a pCat functor $(\X, \X_\f) \to (\chaos_\X, \order_{\X})$, where $\chaos_\X$ is the indiscrete (chaotic) category with the same objects as $\X$, and for any two objects $A, B \in \chaos_\X$, the transformation $A \to B \in \order_\X$ if ``$A ~{\sf \order}~ B$" is true. 
\end{definition}

Let us look at an example of a preorder collapse of ${\sf (Rand^\op, Uniform)}$ determined by the majorization relation \cite{MOA11}. 
Suppose $p:= (p_1, p_2, p_3, \cdots, p_n)^\uparrow$ and $q:= (q_1, q_2, q_3, \cdots, q_m) ^\uparrow$ such that the elements of the distribution are in increasing order.
We say $q$ is majorized by $p $ written as $q \preceq p$ if the Lorenz curve \cite{MOA11, Lor05} of $p$ lies either completely below the Lorenz curve of $q$ (see Figure~\ref{Fig: Lorenz}) or coincides with it. This means that $q$ is more uniform than $p$. 

 {\bf Lorenz curve} \cite{Lor05, MOA11, GMV15} $L(p)$ for a probability distribution $p := (p_1, p_2, \cdots, p_n)$ is characterized as the linear interpolation of points 
 $(i/n, \sum_{k=1}^{i} p_k)$, where $i=0,1, \cdots, n$; see Figure~\ref{Fig: Lorenz}.
 
 \begin{figure} 
 \centering 
 \includegraphics[scale=0.5]{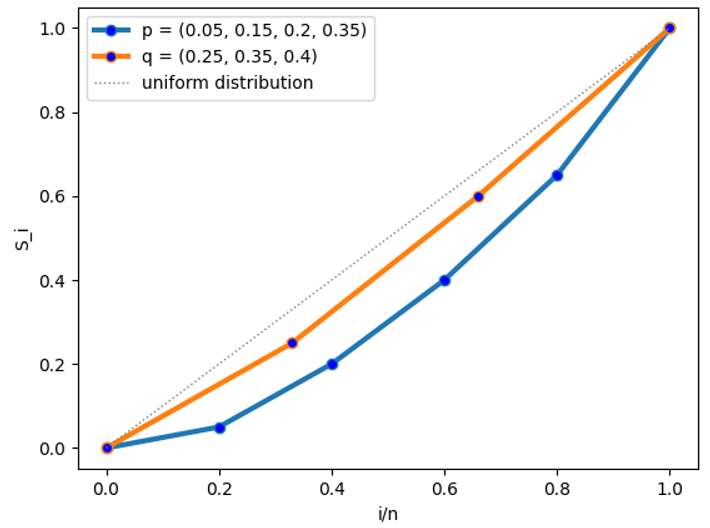}
 \caption{The Lorenz curve of $q$ is majorized by the Lorenz curve of $p$}
 \label{Fig: Lorenz}
 \end{figure}

\begin{example}
\label{defn: majorization}
Define $P: {\sf Rand} \to {\sf chaos_{Rand}}$ as follows: for each probability distribution $p \in {\sf Rand}$, $P(p) := p$; each $M: p \to q$ is mapped to the unique arrow $p \to  q$.
\end{example}

The functors into chaotic categories are determined by the objects. It is straightforward that $P$ as defined above is a functor. The below theorem establishes that $P: ({\sf Rand^\op, Uniform}) \rightarrow$ $(\chaos_{\sf Rand}, \preceq_{\sf Rand})$ is pCat functor.

\begin{theorem}\cite{HLP29}
\label{Thm: Majorization}
Given two finite probability distributions $p$ and $q$, $q \preceq p$ if and only if there exists a uniform matrix $U: p \to q$ such that $pU = q$.
\end{theorem}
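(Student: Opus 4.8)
The plan is to recognize this as the classical Hardy--Littlewood--P\'olya theorem stated in the row-vector, possibly-rectangular language of $(\Rand, \Uniform)$, to first reduce the case where $p$ and $q$ have different lengths to the square (doubly stochastic) case, and then to prove the two implications for square matrices. For the reduction I would use the \emph{splitting} map $S_k \colon (X,p) \to (X \times \{1,\dots,k\}, p')$ with $p'_{(x,j)} = p_x/k$, whose matrix has entries $[x=x']\tfrac1k$: it is row stochastic, each column sums to $1/k = |X|/|Y|$, so it is a uniform matrix, and its companion averaging map $R_k$ (which recombines the $k$ copies) is uniform with $S_k R_k = I$. Splitting a state into $k$ equal parts does not move any interpolation point of the Lorenz curve, so $\tilde p := pS_m$ and $\tilde q := qS_n$ are distributions on a common set of $nm$ points with the \emph{same} Lorenz curves as $p$ and $q$; hence $q \preceq p \iff \tilde q \preceq \tilde p$. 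Since uniform matrices are closed under composition, a square uniform $\tilde U$ with $\tilde p\,\tilde U = \tilde q$ yields the rectangular $U := S_m\,\tilde U\,R_n$ with $pU=q$, and conversely a rectangular $U$ yields the square $\tilde U := R_m\,U\,S_n$ with $\tilde p\,\tilde U = \tilde q$; so it suffices to treat $|X| = |Y| = N$ with $U$ doubly stochastic.

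For the square case I would capture both directions through the permutohedron description of majorization, namely
\[ \{\, r : r \preceq p \,\} \;=\; \operatorname{conv}\{\, pP_\sigma : \sigma \in S_N \,\} \;=\; \{\, pU : U \text{ doubly stochastic}\,\}, \]
where $P_\sigma$ runs over permutation matrices. The second equality is the Birkhoff--von Neumann theorem (doubly stochastic matrices are exactly convex combinations of permutation matrices), which I would simply cite. The \textbf{easy direction} ($\Leftarrow$) then follows from the inclusion ``permutohedron $\subseteq$ majorization set'': every $pP_\sigma$ has the same increasing rearrangement as $p$, and for each $k$ the map $r \mapsto \sum_{i=1}^{k} r_i^{\uparrow}$ is concave (it is the minimum over $k$-element index sets of $\sum_{i\in S} r_i$), so $\sum_{i=1}^{k} q_i^{\uparrow} \ge \sum_{i=1}^{k} p_i^{\uparrow}$ for every convex combination $q = pU$, with equality at $k=N$; this is precisely $q \preceq p$ in the Lorenz-curve form.

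The hard part will be the reverse inclusion, that $q \preceq p$ forces $q$ into the permutohedron of $p$ (equivalently, that a doubly stochastic $U$ realizing $pU=q$ actually exists); this is the real content of the theorem and the main obstacle. I would argue constructively with \emph{T-transforms} (Robin Hood / Pigou--Dalton transfers): a matrix $T = \lambda I + (1-\lambda)Q$, where $Q$ swaps two coordinates and $0\le\lambda\le1$, is doubly stochastic and replaces a pair $(p_i,p_j)$ by a pair strictly closer to its mean. The key lemma (Marshall--Olkin) is that whenever $q \preceq p$ and $q \ne p$ one can pick two coordinates on which they disagree and a $\lambda$ so that the single transfer $Tp$ still satisfies $q \preceq Tp$ while strictly decreasing $\sum_i |p_i^{\downarrow} - q_i^{\downarrow}|$ (or the number of disagreeing coordinates). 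Iterating terminates in finitely many steps at $q$, and the product of the T-transforms is the required doubly stochastic matrix; transposing it if needed matches the $pU=q$ orientation, and it is exactly a uniform matrix in the sense of Example~\ref{eg: RandUni}. Reassembling via the refinement reduction of the first paragraph then gives the full rectangular statement.
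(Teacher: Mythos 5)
Your proposal cannot be checked against the paper's own proof for a simple reason: the paper gives none. Theorem~\ref{Thm: Majorization} is imported wholesale as the classical Hardy--Littlewood--P\'olya result, with \cite{HLP29} standing in for the argument. Judged on its own merits, your proof is correct and is essentially the standard modern proof of that classical theorem, assembled in the right order: Birkhoff--von Neumann plus concavity of $r \mapsto \sum_{i=1}^{k} r_i^{\uparrow}$ (a minimum of linear functionals) gives the direction from a stochastic-matrix relation to the Lorenz-curve inequality, and the Marshall--Olkin single-transfer lemma, iterated until the number of disagreeing coordinates drops to zero, gives the converse; your two remaining black boxes (Birkhoff's theorem and the fact that the chosen transfer preserves $q \preceq pT$, which you state but do not verify) are legitimate citations. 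What your write-up adds beyond the paper's bare citation is genuinely needed: the theorem as stated allows $|X| \neq |Y|$ and speaks of uniform matrices, whereas \cite{HLP29} concerns equal-length vectors and doubly stochastic matrices. Your splitting/averaging reduction --- $S_k$ and $R_k$ with $S_kR_k = I$, invariance of the Lorenz curve under equal splitting, and closure of uniform matrices under composition --- is exactly the missing bridge, and each of those claims checks out. Note also that your reduction silently resolves an inconsistency in Example~\ref{eg: RandUni}, whose text asserts both that the columns of a uniform matrix sum to $1$ and that they sum to $|X|/|Y|$; only the latter reading (the one you adopt) makes the rectangular statement true. Two cosmetic points: you write $Tp$ where the paper's row-vector convention calls for $pT$ (harmless, since each T-transform is symmetric), and the full permutohedron identity is more than you need --- the Birkhoff half suffices for the easy direction, and the T-transform construction independently delivers the hard one.
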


\begin{corollary} 
The map $P: ({\sf Rand^\op, Uniform}) \rightarrow (\chaos_{\sf Rand}, \preceq)$ is a preorder collapse.
\end{corollary}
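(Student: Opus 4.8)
The plan is to verify directly that $P$ satisfies the two requirements of a pCat functor into the partitioned category $(\chaos_{\sf Rand}, \preceq)$: that it is a genuine functor, and that it carries free transformations to free transformations. Functoriality is immediate and requires no real work. Since the codomain $\chaos_{\sf Rand}$ is chaotic, there is exactly one arrow between any ordered pair of objects, so $P$ is forced to send identities to identities and composites to composites. Thus the only content lies in the preservation of free transformations, and this is where Theorem~\ref{Thm: Majorization} enters.

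The substantive step is to unwind what a free transformation is on each side and show the two conditions coincide. In the target $(\chaos_{\sf Rand}, \preceq)$ the free arrows are exactly the order relations: there is a free arrow $p \to q$ precisely when $p \preceq q$. In the source $(\Rand^\op, \Uniform)$ a free arrow $p \to q$ is, by definition of the opposite category, a uniform matrix $U \colon q \to p$ in $\Rand$ satisfying $qU = p$. I would then invoke Theorem~\ref{Thm: Majorization}, applied with the roles of the two distributions exchanged, to conclude that such a uniform $U$ exists if and only if $p \preceq q$. Since $P$ is the identity on objects, a free arrow $p \to q$ in the source is therefore sent to the arrow $p \to q$ in $\chaos_{\sf Rand}$, which lies in $\preceq$ exactly because $p \preceq q$ holds. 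Hence free transformations are preserved and $P$ is a pCat functor, i.e.\ a preorder collapse.

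The delicate point --- the only place one can go wrong --- is getting the variance right, namely the reason the source must be $\Rand^\op$ rather than $\Rand$. Reading Theorem~\ref{Thm: Majorization} in the forward direction, a uniform matrix $U \colon p \to q$ with $pU = q$ witnesses $q \preceq p$, so in $\Rand$ a free arrow $p \to q$ would force $q \preceq p$, the reverse of the order arrow $p \to q$ demanded in the target. Passing to $\Rand^\op$ flips the direction of the witnessing matrix and realigns the inequality, which is exactly the contravariance already flagged in the distinction between monotones and op-monotones. I expect this bookkeeping of directions to be the main (and essentially the only) obstacle; once it is handled, the result follows immediately from Theorem~\ref{Thm: Majorization}.
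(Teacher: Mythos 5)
Your proof is correct and follows essentially the same route as the paper's: functoriality is automatic because the chaotic codomain has a unique arrow between any pair of objects, and preservation of free transformations is exactly an application of Theorem~\ref{Thm: Majorization}, with the passage to $\Rand^{\op}$ accounting for the direction reversal. Your treatment of the variance is in fact more explicit than the paper's one-line proof, which simply states that $U\colon p \to q \in {\sf Uniform}$ implies $P(q) \preceq P(p)$.
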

\begin{proof} 
By Theorem~\ref{Thm: Majorization}, if $U: p \to  q \in {\sf Uniform}$, then $P(q) \preceq P(p)$.
\end{proof}

Given a resource theory, one can assign a real value to each resource such that the assignment respects the preorder defined by the theory. In this respect, a monotone is a function $f: R \to [0, \infty]$ where $R$ is a set of resources which preserves the preorder on the equivalence class of resources. To represent a monotone as a pCat functor, the poset $([0,\infty], \leq)$ is defined as a pCat as follows:- 

\begin{definition}
\label{defn: pCat poset}
The poset $([0,\infty], {\leq})$ is encoded as the  pCat $(\chaos_{[0, \infty]}, \leq_{[0,\infty]})$ where $\chaos_{[0, \infty]}$ is the chaotic category with objects as $r \in [0, \infty]$ and the free transformations are those maps respecting the $\leq$ order $(m \to n \in \leq_{[0,\infty]}$ if and only if $m \leq n)$.
\end{definition}

\begin{definition}
\label{defn: monotone}
A {\bf monotone} for a resource theory $(\X, \X_\f)$ is a pCat  functor  \[ F: (\X, \X_\f) \to \left(\chaos_{[0,\infty]}, \leq_{[0, \infty]} \right).\] 

An {\bf op-monotone} is a contravariant monotone, that is,  \[F: (\X, \X_\f) \to \left(\chaos_{[0,\infty]}, \leq_{[0, \infty]} \right)^\op :=  \left(\chaos_{[0,\infty]}^\op, \geq_{[0, \infty]} \right).\]
\end{definition}

Even though op-monotones are more frequently used in resource theories, we defined the codomain of a monotone to be $(\chaos_{[0,\infty]}, \leq_{[0, \infty]})$ because, in general, an arrow $a \to b$ in a posetal category refers to $a \leq b$, and such ordering becomes relevant when one computes $\inf$ and $\sup$ of a subset in the poset, see Section~\ref{Sec: monotone extensions}. Let us look at a few examples of monotones: 

In information theory, Shannon entropy is a well-known measure of randomness or uncertainty in the outcome when a random experiment (experiment with multiple outcomes) is repeated one or more times. The value of Shannon entropy lies in [0,1] where 0 represents absolute certainty and 1 represents maximum uncertainty. In the following example, we construct a monotone for the resource theory of randomness, ${\sf (Rand, Detmn)}$, and an op-monotone ${\sf (Rand, Uniform)}$ based on the Shannon entropy:

\begin{example}
\label{eg: Shannon entropy}
Define ${\sf Shannon}: {\sf Rand} \to \chaos_{[0, \infty]}$ as follows:
\begin{itemize}
\item For all finite probability distributions, $(X, p) \in \mathsf{Rand}$, ${\sf Shannon}(p) := H(p)$ where $H(p)$ is the Shannon entropy of $p$:
\[H(p) := - \sum_{1 \leq i \leq |X|} p_i ~{\sf log}~p_i \]

\item For all $(X, p) \xrightarrow{{\sf Shannon}} (X, q) \in \mathsf{Rand}$, then $F(M)$ is the unique arrow $H(p) \to H(q)$
\end{itemize}
\end{example}

It is straightforward that ${\sf Shannon}: {\sf Rand} \to \chaos_{[0, \infty]}$ is a functor. We note that the functor 
$\Shannon$ acts as a monotone for $(\Rand, {\sf Detmn})$ and 
as an op-monotone for $(\Rand, \Uniform)$.

\begin{lemma}\cite{CoT06}
\label{Lemma: Shannon}
Suppose ${\sf Shannon}: (X,p) \to (Y,q) \in {\sf Detmn}$, then $H(p) \geq H(q)$.
\end{lemma}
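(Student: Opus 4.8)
The plan is to read the hypothesis ``$M\colon (X,p) \to (Y,q) \in {\sf Detmn}$'' as asserting the existence of a free (deterministic) transformation between the two resources, and to recall from Example~\ref{eg: RandDetmn} that such an $M$ is precisely the $0/1$ row-stochastic matrix of a function $f\colon X \to Y$. First I would translate the constraint $pM = q$ into the pushforward relation
\[ q_y \;=\; \sum_{x \in f^{-1}(y)} p_x \qquad \text{for all } y \in Y, \]
so that $q$ is exactly the coarse-graining of $p$ along the fibres of $f$, and the sets $f^{-1}(y)$ partition $X$.

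The heart of the argument is the grouping (chain-rule) property of Shannon entropy. For each $y$ with $q_y > 0$, I would define the conditional distribution $r^{(y)}$ on the fibre $f^{-1}(y)$ by $r^{(y)}_x := p_x / q_y$. Expanding $-\sum_x p_x \log p_x$, splitting the sum over $X$ into a sum over the fibres, and writing $\log p_x = \log q_y + \log r^{(y)}_x$, yields the identity
\[ H(p) \;=\; H(q) \;+\; \sum_{y : q_y > 0} q_y\, H\!\left(r^{(y)}\right). \]
Since each $r^{(y)}$ is a genuine probability distribution, each conditional term $H\!\left(r^{(y)}\right)$ is nonnegative, and hence $H(p) \geq H(q)$, as required.

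A quicker alternative, which avoids even naming the conditional distributions, is to argue fibrewise: for every $x \in f^{-1}(y)$ we have $p_x \leq q_y$, whence $-\log p_x \geq -\log q_y$; multiplying by $p_x \geq 0$ and summing over the fibre gives $-\sum_{x \in f^{-1}(y)} p_x \log p_x \geq -q_y \log q_y$, and summing over $y$ gives $H(p) \geq H(q)$.

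There is no serious obstacle here: the statement is the standard fact (cf.\ \cite{CoT06}) that deterministic coarse-graining cannot increase entropy. The only points requiring care are bookkeeping rather than mathematical depth. One must adopt the convention $0 \log 0 = 0$ and check that atoms with $p_x = 0$ (and fibres with $q_y = 0$, which then contain only such atoms) contribute nothing, so that both the grouping identity and the fibrewise inequality remain valid on the full supports. I expect the translation of the matrix hypothesis into the pushforward relation, together with this zero-probability bookkeeping, to be the most delicate part of writing the proof cleanly.
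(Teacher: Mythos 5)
Your proposal is correct. Note that the paper itself offers no proof of this lemma at all --- it is stated with a citation to Cover and Thomas \cite{CoT06} as a known fact --- so there is no internal argument to compare against; your proof is precisely the standard one that citation points to, namely that a deterministic map coarse-grains $p$ along the fibres of a function $f\colon X \to Y$, and the grouping identity $H(p) = H(q) + \sum_{y : q_y > 0} q_y H\bigl(r^{(y)}\bigr)$ (or, equivalently, your quicker fibrewise bound $p_x \leq q_y$) gives $H(p) \geq H(q)$, with the $0\log 0 = 0$ bookkeeping handled exactly as you describe.
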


\begin{corollary} 
The map $\Shannon: ({\sf Rand, Detmn}) \rightarrow ([0,\infty], \geq)$ defined as in Example~\ref{eg: Shannon entropy} is an op-monotone.
\end{corollary}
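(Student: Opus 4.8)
The plan is to verify the two defining conditions of a pCat functor for $\Shannon$, now regarded as landing in the opposite poset $([0,\infty], \geq)$ rather than in $([0,\infty], \leq)$. Functoriality has already been observed: $\Shannon$ sends each resource $(X,p)$ to the real number $H(p)$, and since the target is the chaotic category $\chaos_{[0,\infty]}$, there is a unique arrow between any two values, so $\Shannon$ automatically preserves identities and composition. Moreover the chaotic category on $[0,\infty]$ is its own opposite, so passing from $\leq$ to $\geq$ does not change the underlying functor at all; it only changes which arrows are designated free. Hence the entire content of the statement reduces to checking that $\Shannon$ preserves free transformations in the op-sense of Definition~\ref{defn: monotone}.

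First I would spell out what this preservation requires. The free transformations of $({\sf Rand}, {\sf Detmn})$ are precisely the deterministic row-stochastic matrices, i.e.\ the functions $X \to Y$. On the target side, by the op-convention an arrow $m \to n$ is free exactly when $m \geq n$. So preservation of free transformations amounts to the implication: whenever there is a deterministic $M : (X,p) \to (Y,q)$, one has $H(p) \geq H(q)$, equivalently that $\Shannon(M) : H(p) \to H(q)$ is a free arrow of $([0,\infty], \geq)$.

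This implication is exactly Lemma~\ref{Lemma: Shannon}, so the corollary follows by invoking it --- in complete parallel with the preceding corollary, where the preorder collapse $P$ was obtained by invoking the majorization theorem. I do not expect any genuine obstacle here: the mathematical weight sits entirely inside Lemma~\ref{Lemma: Shannon} (deterministic processing cannot increase Shannon entropy), and the corollary is just the observation that this inequality is precisely the freeness-preservation condition once the codomain poset is oriented by $\geq$. The one point worth flagging is the direction of the order: it is essential that the codomain be $([0,\infty], \geq)$ and not $([0,\infty], \leq)$, since $\Shannon$ is monotone for free transformations only in the contravariant sense; indeed for a general non-deterministic stochastic matrix the entropy may strictly increase, which is exactly why the restriction to the free subcategory ${\sf Detmn}$ is needed.
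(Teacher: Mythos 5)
Your proposal is correct and matches the paper's own (very terse) argument: the paper treats functoriality as already established in Example~\ref{eg: Shannon entropy} and lets the corollary follow immediately from Lemma~\ref{Lemma: Shannon}, exactly as you do, with the inequality $H(p) \geq H(q)$ for deterministic $M$ being precisely the condition that $\Shannon$ sends ${\sf Detmn}$-arrows to free arrows of $([0,\infty],\geq)$. Your additional remarks --- that the chaotic category is self-dual so only the designation of free arrows changes, and that the $\geq$ orientation is forced because general stochastic maps can increase entropy --- are correct elaborations of the same one-line proof, not a different route.
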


\begin{lemma}\cite{GMV15, MOA11}
\label{Lemma: opShannon}
Suppose ${\sf Shannon}: (X,p) \to (Y,q) \in {\sf Uniform}$, then $H(p) \leq H(q)$.
\end{lemma}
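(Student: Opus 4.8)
The plan is to route the statement through the majorization order that Theorem~\ref{Thm: Majorization} has already attached to $\Uniform$, and then to use that Shannon entropy is antitone for that order because it is Schur-concave.

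First I would unpack the hypothesis. A free transformation $(X,p) \to (Y,q)$ in $\Uniform$ is, by Example~\ref{eg: RandUni}, exactly a uniform matrix $M$ with $pM = q$. Its mere existence is the right-hand side of Theorem~\ref{Thm: Majorization}, so that theorem immediately gives $q \preceq p$, i.e.\ $q$ is majorized by $p$. The lemma is therefore reduced to the single implication $q \preceq p \implies H(p) \leq H(q)$, which says that $H$ reverses the majorization order.

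The key step is the Schur-concavity of $H$. I would argue it from two elementary properties of $H(p) = -\sum_i p_i \log p_i$: it is symmetric, being a coordinate-wise sum, and it is concave, since $t \mapsto -t\log t$ is concave on $[0,1]$ and sums of concave functions are concave. A symmetric concave function is Schur-concave \cite{MOA11}, and Schur-concavity means precisely that $q \preceq p$ forces $H(q) \geq H(p)$; combined with the previous paragraph this is the desired inequality $H(p) \leq H(q)$.

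The one point I would treat with care --- and the only genuine obstacle --- is that $p$ and $q$ may be supported on sets $X$ and $Y$ of different sizes, whereas the classical Schur order compares vectors of equal length. Here the Lorenz-curve definition of $\preceq$ used in the paper already makes sense across different supports, and I would reconcile it with the classical statement by padding the shorter distribution with extra zero entries: neither the Lorenz curve (hence $\preceq$) nor $H$ (since $0\log 0 = 0$) is affected by appending zeros, so the equal-length Schur-concavity argument applies verbatim after padding. With this observation the proof is complete.
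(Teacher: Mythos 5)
Your overall route --- read off a uniform matrix from the hypothesis, invoke Theorem~\ref{Thm: Majorization} to get $q \preceq p$, then apply Schur-concavity of $H$ --- is the natural one, and since the paper offers no proof of this lemma at all (it simply cites \cite{GMV15, MOA11}), correctness is the only thing to judge. When $|X| = |Y|$ your argument is correct and standard: uniform matrices are then doubly stochastic, $\preceq$ is ordinary equal-length majorization, and symmetry plus concavity of $H$ closes the proof. The gap is exactly at the step you flagged and then dismissed. The paper's Lorenz curve has its horizontal axis normalized by the dimension (it interpolates the points $(i/n, \sum_{k \leq i} p_k)$), so appending zeros \emph{does} change the curve, and your padding claim is false. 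Concretely, $u_2 = (1/2,1/2)$ and $u_4 = (1/4,1/4,1/4,1/4)$ both have the diagonal as their Lorenz curve, so under the paper's order they are equivalent: $u_2 \preceq u_4$ and $u_4 \preceq u_2$. But after zero-padding, $(1/2,1/2,0,0)$ \emph{strictly majorizes} $(1/4,1/4,1/4,1/4)$ in the equal-length sense --- the opposite of the relation your reduction needs. So the implication ``$q \preceq p$ in the paper's sense $\Rightarrow$ zero-padded $q$ is majorized by zero-padded $p$'' fails, and with it the transfer to equal-length Schur-concavity.

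This is not a repairable technicality, because the cross-dimensional statement you are trying to prove is itself false under the paper's intended notion of uniform matrix (columns summing to $|X|/|Y|$; the paper's square-case remark and its example $(\{*\},(1)) \to (X,u)$ make clear that non-square uniform matrices are allowed). Take $p = u_4$, $q = u_2$, and let $M$ be the $4 \times 2$ matrix with every entry $1/2$ (or any deterministic $2$-to-$1$ map): $M$ is row stochastic, its columns sum to $4/2 = 2$, and $pM = q$, so $M: (X,p) \to (Y,q)$ lies in ${\sf Uniform}$, yet $H(p) = \log 4 > \log 2 = H(q)$. Equivalently: no function can reverse the paper's $\preceq$ while distinguishing $u_2$ from $u_4$, since they are $\preceq$-equivalent, so $H$ cannot be antitone for this order. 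What is actually true --- and what \cite{GMV15} proves --- is either the lemma restricted to $|X| = |Y|$, or the cross-dimensional statement with $H(p)$ replaced by the nonuniformity $\log|X| - H(p)$. The correct cross-dimensional normalization tensors with uniform ancillas (compare $p \otimes u_{|Y|}$ against $q \otimes u_{|X|}$) instead of appending zeros; running your own Schur-concavity argument on that pair yields $\log|X| - H(p) \geq \log|Y| - H(q)$, which recovers the claimed inequality $H(p) \leq H(q)$ only when $|X| \leq |Y|$.
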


\begin{corollary} 
The map $\Shannon: ({\sf Rand, Uniform}) \rightarrow ([0,\infty], \leq)$ defined as in Example~\ref{eg: Shannon entropy} is a monotone.
\end{corollary}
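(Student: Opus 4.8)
The plan is to unfold Definition~\ref{defn: monotone} and verify the two conditions that make $\Shannon$ a monotone for $({\sf Rand}, \Uniform)$: first, that $\Shannon$ is a functor $\Rand \to \chaos_{[0,\infty]}$; and second, that it sends every free transformation of $({\sf Rand}, \Uniform)$ --- that is, every uniform matrix --- to a free transformation of $(\chaos_{[0,\infty]}, \leq_{[0,\infty]})$, namely an arrow that respects the $\leq$ order.

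The first condition is already recorded as straightforward in the remark immediately following Example~\ref{eg: Shannon entropy}. Concretely, $\Shannon$ sends each object $(X,p)$ to the real number $H(p)$, and since $\chaos_{[0,\infty]}$ is chaotic there is a unique arrow between any two objects; preservation of identities and of composition is then automatic. This step is common to both the monotone and op-monotone readings of $\Shannon$, so nothing new is needed here.

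The second condition is the substantive one, though it too reduces to a result already in hand. By Definition~\ref{defn: pCat poset}, an arrow $H(p) \to H(q)$ lies in $\leq_{[0,\infty]}$ exactly when $H(p) \leq H(q)$. Hence I must check that whenever $M \colon (X,p) \to (Y,q)$ is a uniform matrix --- a free transformation of $({\sf Rand}, \Uniform)$ --- the inequality $H(p) \leq H(q)$ holds. But this is precisely the content of Lemma~\ref{Lemma: opShannon}, so invoking that lemma directly yields $\Shannon(M) \in \leq_{[0,\infty]}$ and completes the verification.

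There is essentially no obstacle once Lemma~\ref{Lemma: opShannon} is available: the corollary is simply a repackaging of that lemma into the pCat-functor language of Definition~\ref{defn: monotone}. The only point worth emphasizing is the contrast in the direction of the order with the earlier op-monotone corollary. For $({\sf Rand}, {\sf Detmn})$ the free transformations are deterministic maps, along which entropy \emph{decreases} (Lemma~\ref{Lemma: Shannon}), so $\Shannon$ lands in $([0,\infty], \geq)$ and is an op-monotone; for $({\sf Rand}, \Uniform)$ the free transformations are uniform matrices, along which entropy \emph{increases}, so $\Shannon$ lands in $([0,\infty], \leq)$ and is a genuine (covariant) monotone.
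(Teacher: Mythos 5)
Your proposal is correct and matches the paper's (implicit) argument exactly: the paper states this corollary without a separate proof precisely because it follows immediately from Lemma~\ref{Lemma: opShannon} (free transformations in $\Uniform$ satisfy $H(p) \leq H(q)$, so $\Shannon$ sends them into $\leq_{[0,\infty]}$) together with the remark after Example~\ref{eg: Shannon entropy} that $\Shannon: {\sf Rand} \to \chaos_{[0,\infty]}$ is a functor. Your closing observation about the contrast with the $({\sf Rand}, {\sf Detmn})$ op-monotone case is also the correct reading of the pair of corollaries.
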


Next we describe a monotone for the resource theory, ${(\cDivergence, \cProcessing)}$:

\begin{example} \label{defn: divergence monotone} \cite[Definition 2]{GoT20,Gou21}
A {\bf classical divergence} $D: {\cDivergence} \to \chaos_{[0,\infty]}$ is any  functor, that for any resource $((p,q),X) \in {\cDivergence}$ and a resource transformation $(M, M) \in {\cDivergence}$, satisfies the data processing inequality: \[D(p,q) \geq D(pM, qM) \] 
\end{example}

\begin{lemma}
Any classical divergence $D: ({\cDivergence}, {\cProcessing} \to (\chaos_{[0,\infty]}, \geq_{[0,\infty]})$ is an op-monotone.
\end{lemma}

The following is a monotone for ${\sf PureBip}$ as follows \cite{NiC11, Wilde}: 

\begin{example} 
\label{eg: Schmidt}
Define ${\sf Schmidt}: {\sf PureBip} \rightarrow \chaos_{[0,\infty]}$ to be the following: for all resources $\rho^{HK} \in {\sf PureBip}$, where $\rho^{HK} \in L(H \ox K)$ is a pure quantum state, 
\[ N \left(\rho^{HK} \right) := {\rm Rank} \left(\rho^H \right) \text{ where, } \rho^H := {\rm Tr}_K\left(\rho^{HK} \right)\]
\end{example}

\begin{lemma}\cite{NiC11, Wilde}
$N: ({\sf PureBip}, {\sf LOCC_p}) \rightarrow (\chaos_{[0,\infty]}, \geq_{[0, \infty]})$ is an op-monotone.
\end{lemma}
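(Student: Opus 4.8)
The plan is to unwind the definition of op-monotone and reduce the statement to a single order-preservation inequality. By Definition~\ref{defn: monotone}, an op-monotone is a pCat functor $N \colon ({\sf PureBip}, {\sf LOCC_p}) \to (\chaos_{[0,\infty]}^\op, \geq_{[0,\infty]})$. Since $\chaos_{[0,\infty]}$ is a chaotic category, $N$ automatically preserves identities and composition (there is a unique arrow between any two values), so $N$ is a functor as soon as it is defined on objects, which it is. Hence the entire content of the lemma is the \emph{free-transformation-preservation} condition: for every free transformation $\mathcal{E}\colon \rho^{HK} \to \sigma^{H'K'}$ in ${\sf LOCC_p}$, the induced arrow must lie in $\geq_{[0,\infty]}$, i.e.\ I must show
\[
N(\rho^{HK}) \;\geq\; N(\sigma^{H'K'}),
\qquad\text{that is,}\qquad
{\rm Rank}\!\left(\rho^H\right) \;\geq\; {\rm Rank}\!\left(\sigma^{H'}\right).
\]
In words: the Schmidt rank of a pure bipartite state does not increase under LOCC.

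The quantity $N(\rho^{HK}) = {\rm Rank}(\rho^H)$ is precisely the Schmidt rank of $\rho^{HK}$, namely the number of nonzero Schmidt coefficients, equivalently the number of nonzero eigenvalues of the reduced state $\rho^H = {\rm Tr}_K(\rho^{HK})$. Writing $\lambda(\rho)$ for the vector of these eigenvalues (a probability vector summing to $1$), I would establish the inequality via Nielsen's majorization theorem \cite{NiC11, Wilde}: a deterministic LOCC transformation $\rho^{HK} \to \sigma^{H'K'}$ between pure bipartite states exists if and only if $\lambda(\rho) \preceq \lambda(\sigma)$, i.e.\ the Schmidt-coefficient vector of the input is majorized by that of the output. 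The remaining step is purely combinatorial: if $x \preceq y$ for probability vectors $x,y$, then $y$ is ``more peaked'' than $x$, so the number of nonzero entries of $y$ is at most the number of nonzero entries of $x$. Applying this with $x = \lambda(\rho)$ and $y = \lambda(\sigma)$ yields ${\rm Rank}(\sigma^{H'}) \leq {\rm Rank}(\rho^H)$, which is exactly the desired inequality.

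As an alternative, more self-contained route I could argue directly: any LOCC protocol is built from local completely positive maps interleaved with classical communication, and a local Kraus operator $A \ox B$ sends the reduced state $\rho^H$ to (the normalization of) $A\rho^H A^\dagger$, whose rank satisfies ${\rm Rank}(A\rho^H A^\dagger) \leq {\rm Rank}(\rho^H)$; since classical communication and conditioning cannot create new support, the Schmidt rank is non-increasing at every round. I expect the main obstacle to be handling the full generality of an LOCC transformation, which is an interactive, multi-round, probabilistic process producing an ensemble of outcomes, rather than a single Kraus operator; ensuring that the \emph{deterministic} single-output transformation required by the free maps of ${\sf PureBip}$ genuinely obeys the rank inequality is cleanest to settle by invoking Nielsen's theorem rather than tracking the protocol by hand. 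Once the inequality ${\rm Rank}(\rho^H) \geq {\rm Rank}(\sigma^{H'})$ is in place, the conclusion that $N$ is an op-monotone follows immediately from the definitions.
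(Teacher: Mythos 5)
Your proposal is correct. Note that the paper itself gives no proof of this lemma at all: it is stated with citations to \cite{NiC11, Wilde} and the monotonicity of Schmidt rank under LOCC is simply imported from the literature. Your argument supplies exactly the standard proof that those references contain: the reduction is right (functoriality into the chaotic category is automatic, so the whole content is that a free map $\rho^{HK}\to\sigma^{H'K'}$ in ${\sf LOCC_p}$ forces ${\rm Rank}(\rho^H)\geq{\rm Rank}(\sigma^{H'})$), Nielsen's theorem is invoked with the correct orientation (input Schmidt vector majorized by output), and the final step --- that if $x$ is majorized by $y$ then $y$ has at most as many nonzero entries as $x$, since the partial sums of $y$ over the support of $x$ must already reach $1$ --- is sound. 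The only blemish is in your alternative sketch: the reduced state after applying $A\ox B$ is $A\,{\rm Tr}_K\!\left[\psi(1\ox B^\dagger B)\right]A^\dagger$, not $A\rho^H A^\dagger$; the cleaner elementary argument is that $(A\ox B)$ applied to a Schmidt decomposition with $r$ terms yields a sum of $r$ product vectors, hence Schmidt rank at most $r$. Since you defer to Nielsen's theorem for the actual proof, this does not affect correctness.
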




  


\subsection{Kan Extensions of monotones}
\label{Sec: monotone extensions}

Now, we set up resource theories to apply Kan extensions for extending monotones from one resource theory to another when there exists a pCat functor between them. 

Given a monotone $M: (\X, \X_\f) \to (\chaos_{[0, \infty]}, \leq_{[0,\infty]})$ and a pCat functor $K:(\X, \X_\f) \to (\Y, \Y_\f)$, one could desire to extend $M$ to obtain monotones on $(\Y, \Y_\f)$. Observe that a monotone $M: (\X, \X_\f) \to \left( \chaos_{[0, \infty]}, \leq_{[0,\infty]} \right)$ is concerned only with the free transformations: if $f: A \to B \in \X_\f$, then $M(A) \leq M(B)$; otherwise $M(A) \to M(B)$ is the unique arrow signifying that there is no order between $A$ and $B$. Since $\X_\f$ includes all the objects of $\X$, and the monotone $M$ is concerned with only free transformations, in order to extend $M: (\X, \X_\f) \to \pCat{[0, \infty]}{\leq}$ it suffices to extend $M_\f: 
\X_\f \to \leq_{[0, \infty]}$ which is defined as follows :-
\[ M_\f : \X_\f \to \leq_{[0, \infty]};  ~~~ A \to^{f} B \mapsto M(A) \leq M(B) \]

\begin{definition}
\label{defn: monotone extensions}
Let $(\X, \X_\f) \to^{K} (\Y, \Y_\f)$ be a pCat functor. Let $M: (\X, \X_\f) \to (\chaos_{[0,\infty]}, \leq_{[0,\infty]})$ be a monotone for the resource theory $(\X, \X_\f)$. 

\begin{enumerate}[(a)]
\item The {\bf  minimal extension}\footnote{We follow the naming convention in \cite{GoT20} for monotone extensions.} $\underline{M}_K: \Y_\f \to \leq_{[0,\infty]}$ of $M$ along $K$ is the right Kan extension of the functor $M_\f: \X_{\f} \to \leq_{[0, \infty]}$ along the functor $K_\f: \X_{\f} \to \Y_\f; K_\f(h) := K(h)$ (see Figure~\ref{Fig: extensions}-(a)).

\item The {\bf  maximal extension} $\overline{M}_K: \Y_\f \to \leq_{[0,\infty]}$ of $M$ along $K$ is the left Kan extension of the functor $M_\f: \X_{f} \to  \leq_{[0, \infty]}$ along the functor $K_\f: \X_{\f} \to \Y_\f; K_\f(h) := K(h)$ (See Figure~\ref{Fig: extensions}-(b)).  
\end{enumerate}
\end{definition}

\begin{figure}[h]
\[ (a)~~~  \xymatrixcolsep{4pc} \xymatrix{ 
&  \ar@<35pt>@{=>}[d]^{\leq} &\\
\X_{\f} \ar[r]^{K_\f} \ar@/_3pc/[rr]_{M_\f} & \Y_\f \ar@/^3pc/[r]^{G}
\ar@{.>}[r]_---{\underline{M}_K} \ar@{=>}[d]^{\leq}  & \leq_{[0,\infty]} \\ 
 & & &
  }
   ~~~~~ 
 (b)~~~  \xymatrix{ 
 &  \ar@<35pt>@{<=}[d]^{\leq} &\\
 \X_{\f} \ar[r]^{K_\f} \ar@/_3pc/[rr]_{M_\f} & \Y_\f \ar@/^3pc/[r]^{G} \ar@{.>}[r]_---{\overline{M}_K} \ar@{<=}[d]^{\leq}  & \leq_{[0,\infty]} \\ 
 & & &
  } \]
 \caption{(a) Minimal (right Kan) extension \quad \qquad \qquad \quad (b) Maximal (left Kan) extension}  
 \label{Fig: extensions}
\end{figure}

Let us unpack the definition of minimal and maximal extensions of a monotone in Definition~\ref{defn: monotone extensions}. Any category given by a poset with suprema and infima is both complete and cocomplete. Since $([0,\infty], \leq)$ is such a poset,  by Theorem~\ref{thm: computing Kan extensions} one can compute the minimal extension ($\underline{M}_K$) and the maximal extension ($\overline{M}_K$) using equations~\eqref{eqn: right Kan extension} and~\eqref{eqn: left Kan extension} respectively:  

\begin{theorem} \begin{enumerate}[(a)]
\item For all $Y \in \Y$, the minimal extension $\underline{M}_K(Y): \Y_
\f \to \leq_{[0,\infty]}$ is given as:
\begin{equation}
\label{eqn: min extension}
 \underline{M}_K(Y) := {\sf lim}( \pi_{Y\downarrow K} M_\f) = \inf \{ M(X) ~|~ Y \to K(X) \in \Y_\f \} 
\end{equation}

(b)For all $Y \in \Y$, the maximal extension $\overline{M}_K(Y): \Y_
\f \to \leq_{[0,\infty]}$ is given as:
\begin{equation}
\label{eqn: max extension}
    \overline{M}_K(Y) := {\sf colim}( \pi_{K \downarrow Y} M_\f) = \sup \{ M(X) ~|~ K(X) \to Y \in \Y_\f \} 
\end{equation} 
\end{enumerate}
\end{theorem}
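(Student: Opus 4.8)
The plan is to obtain this as a direct specialization of Theorem~\ref{thm: computing Kan extensions} to the posetal target $\leq_{[0,\infty]}$, together with the standard fact that limits and colimits in a poset are infima and suprema respectively.

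First I would check that the hypotheses of Theorem~\ref{thm: computing Kan extensions} are met. The category $\leq_{[0,\infty]}$ is the complete lattice $([0,\infty], \leq)$ regarded as a category, so every subset has both a greatest lower bound and a least upper bound; consequently $\leq_{[0,\infty]}$ is both complete and cocomplete. Since $M_\f : \X_\f \to \leq_{[0,\infty]}$ has this poset as its target and $K_\f : \X_\f \to \Y_\f$ is an ordinary functor, Theorem~\ref{thm: computing Kan extensions} applies and yields $\underline{M}_K(Y) = {\sf lim}(\pi_{Y \downarrow K} M_\f)$ and $\overline{M}_K(Y) = {\sf colim}(\pi_{K \downarrow Y} M_\f)$, which are precisely the first equalities in \eqref{eqn: min extension} and \eqref{eqn: max extension}.

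Next I would unpack the two comma categories and reduce the (co)limits to $\inf$ and $\sup$. For the minimal extension, the coslice $(Y \downarrow K)$ has objects the pairs $(X, f)$ with $f : Y \to K(X)$ a free transformation, and the diagram $\pi_{Y \downarrow K} M_\f$ sends $(X,f)$ to $M_\f(X) = M(X)$. In a posetal category a limiting cone is exactly a greatest lower bound, so ${\sf lim}(\pi_{Y \downarrow K} M_\f)$ is the infimum of the values appearing in the diagram. Because $M_\f$ lands in a poset, the value $M(X)$ depends only on the object $X$ and not on the chosen arrow $f$, so this infimum may be taken over the set $\{ M(X) \mid Y \to K(X) \in \Y_\f \}$, giving \eqref{eqn: min extension}. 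The maximal extension is dual: one replaces the coslice $(Y \downarrow K)$ by the slice $(K \downarrow Y)$, whose objects are pairs $(X, f)$ with $f : K(X) \to Y$, and replaces the greatest lower bound by the least upper bound, yielding \eqref{eqn: max extension}.

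The only point requiring care is this last reduction: one must observe both that a (co)limit in a posetal category collapses to an order-theoretic greatest-lower / least-upper bound (so the universal cone becomes a mere inequality), and that the diagram factors through the values $M(X)$ alone, making the particular witnessing morphism $f$ irrelevant to the bound. Given the completeness and cocompleteness of $[0,\infty]$ established in the first step, the required infima and suprema always exist, and no further obstacle remains.
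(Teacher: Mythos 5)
Your proposal is correct and takes essentially the same approach as the paper: the paper likewise obtains this theorem by applying Theorem~\ref{thm: computing Kan extensions} to the posetal target, noting that a poset with all suprema and infima is complete and cocomplete as a category, so that the pointwise (co)limit formulae specialize to the stated $\inf$ and $\sup$. Your write-up simply makes explicit the final reduction step (that (co)limits in a posetal category are greatest lower / least upper bounds, and that only the values $M(X)$, not the witnessing arrows in the comma categories, affect the bound), which the paper leaves implicit in the paragraph preceding the theorem.
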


See Figure~\ref{Fig: extensions schematic}-(a) for a schematic of the minimal and maximal extensions of a monotone.  

Usually for resource theories the codomain of the monotones is $([0,\infty]^\op, \leq_{[0,\infty]}^{\op}) = ([0,\infty]^\op, \geq_{[0,\infty]})$. In the computation of extensions of op-monotones, $\inf$ is flipped to $\sup$ in equation~\eqref{eqn: min extension}, and $\sup$ to be flipped to $\inf$ in equation~\eqref{eqn: max extension}:

\begin{corollary}
Suppose $M: (\X, \X_\f) \to \pCat{[0,\infty]}{\geq}$ be a monotone and $K: (\X, \X_\f) \to (\Y, \Y_\f)$. Then, 

\begin{enumerate}[(a)]
\item For all $Y \in Y$, the minimal extension $\underline{M}_K(Y): \Y_
\f \to \geq_{[0,\infty]}$ is given as:
\begin{equation}
\label{eqn: min rev extension}
 \underline{M}_K(Y) := {\sf colim}( \pi_{Y \downarrow K} M_\f) = \sup \{ M(X) ~|~ Y \to K(X) \in \Y_\f \} 
\end{equation}

(b)For all $Y \in Y$, the maximal extension $\overline{M}_K(Y): \Y_\f \to \geq_{[0,\infty]}$ is given as:
\begin{equation}
\label{eqn: max rev extension}
    \overline{M}_K(Y) := {\sf lim}( \pi_{K \downarrow Y} M_\f) = \inf \{ M(X) ~|~ K(X) \to Y \in \Y_\f \} 
\end{equation} 
\end{enumerate}

\end{corollary}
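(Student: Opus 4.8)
The statement to prove is the corollary asserting that when the monotone has codomain $\pCat{[0,\infty]}{\geq}$ (the op-monotone case), the $\inf$ and $\sup$ in the extension formulas get swapped. The plan is to reduce this entirely to the already-established theorem giving equations~\eqref{eqn: min extension} and~\eqref{eqn: max extension} by exploiting the duality $\pCat{[0,\infty]}{\geq} = \pCat{[0,\infty]}{\leq}^{\op}$. The key observation is that an op-monotone $M: (\X, \X_\f) \to \pCat{[0,\infty]}{\geq}$ is literally the same data as a functor $M: (\X,\X_\f) \to \pCat{[0,\infty]}{\leq}^{\op}$, so its associated free-part functor $M_\f$ lands in $\leq_{[0,\infty]}^{\op} = \geq_{[0,\infty]}$.

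\textbf{Key steps.} First I would recall the general duality for Kan extensions: the right Kan extension of $M_\f$ along $K_\f$ computed into a category $\C$ coincides with the left Kan extension computed into $\C^{\op}$, and vice versa, because reversing all arrows interchanges limiting cones with colimiting cocones (this is exactly the $\mathsf{lim}/\mathsf{colim}$ duality underlying Theorem~\ref{thm: computing Kan extensions}). Concretely, since $\pCat{[0,\infty]}{\geq}$ is $\pCat{[0,\infty]}{\leq}$ with its order reversed, a limit (infimum) in the $\leq$ order \emph{is} a colimit (supremum) in the $\geq$ order, and a colimit (supremum) in $\leq$ \emph{is} a limit (infimum) in $\geq$. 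Second, I would apply the definition of the minimal extension as the right Kan extension: by equation~\eqref{eqn: min extension} the right Kan extension over the $\geq$-ordered poset is $\mathsf{lim}(\pi_{Y\downarrow K} M_\f)$ where the limit is now taken in $\geq_{[0,\infty]}$; since an infimum in the reversed order is a supremum in the standard order, this evaluates to $\sup\{M(X) \mid Y \to K(X) \in \Y_\f\}$, yielding equation~\eqref{eqn: min rev extension}. Third, symmetrically for the maximal (left Kan) extension: the colimit in $\geq_{[0,\infty]}$ is an infimum in the standard order, giving $\inf\{M(X) \mid K(X)\to Y \in \Y_\f\}$ and hence equation~\eqref{eqn: max rev extension}. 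The slice and coslice categories $(K\downarrow Y)$ and $(Y\downarrow K)$ are unchanged, since these depend only on $K_\f$ and the direction of arrows in $\Y_\f$, not on the target poset's order.

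\textbf{Main obstacle.} The only subtlety worth flagging is bookkeeping: one must be careful that reversing the order on the \emph{codomain} poset does not reverse the arrows in the indexing slice categories, so that $(K\downarrow Y)$ and $(Y\downarrow K)$ stay exactly as in the $\leq$ case. Thus the minimal extension remains governed by the coslice $(Y\downarrow K)$ and the maximal by the slice $(K\downarrow Y)$; only the aggregation operation ($\inf$ versus $\sup$) flips. Once this is disentangled, the proof is a direct transport along the order-reversing isomorphism $\pCat{[0,\infty]}{\geq} \cong \pCat{[0,\infty]}{\leq}^{\op}$, with no genuinely new computation beyond the preceding theorem. I would therefore present it as a short argument invoking the $\mathsf{lim}$–$\mathsf{colim}$ duality and the self-duality of the extension construction, rather than recomputing the Kan extensions from scratch.
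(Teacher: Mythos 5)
Your proposal is correct and follows essentially the same route as the paper: the paper's own (one-line) proof likewise observes that $\pCat{[0,\infty]}{\geq} = \pCat{[0,\infty]}{\leq}^{\op}$ and invokes the resulting $\mathsf{lim}$/$\mathsf{colim}$ duality to transport the preceding theorem. Your additional bookkeeping remark --- that the slice and coslice indexing categories are untouched by reversing the order on the codomain, so only the aggregation ($\inf$ versus $\sup$) flips --- is a correct and worthwhile clarification of a point the paper leaves implicit.
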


\begin{proof}
Note that $(\chaos_{[0,\infty}, \geq_{[0,\infty]}) = (\chaos_{[0,\infty]}^\op, \leq_{[0,\infty]}^\op) =: \pCat{[0,\infty]}{\leq}^\op$. Hence, the limits in $\pCat{[0,\infty]}{\leq}$ are the colimits in $\pCat{[0,\infty]}{\leq}^\op$.
\end{proof}

 Figure~\ref{Fig: extensions schematic} - (b) and (c) visualizes the difference in computation of minimal extension of a (regular) monotone and an op-monotone.

Note that equation~\eqref{eqn: min rev extension} is same as \cite[Equation 2]{GoT20} and equation~\eqref{eqn: max rev extension} is same as \cite[Equation 3]{GoT20}. Let us have a closer look at equations~\eqref{eqn: min rev extension} and~\eqref{eqn: max rev extension}. The minimum extension $\underline{M}_K$ assigns to any resource $Y \in \Y_\f$ the value of a resource $X \in \X_\f$ such that the value of $X$ is lowest among the value of all those resources which can be transformed freely to $Y$ under $K$ $(KX \to Y)$. If there exists no such $X \in \X_\f$ such that $KX$ can be transformed to $Y$ using a free transformation,  then $\underline{M}_K(Y) = 0$ (colimit of the empty diagram is the initial object). 

Similarly, the maximal extension $\overline{M}_K$ assigns to any resource $Y \in \Y_\f$ the value of a resource $X' \in \X_\f$ such that the value of $X'$ is the highest among the value of all those resources which $Y$ can be transformed to freely under $K$ $(Y \to KX)$. If there does not exist any such $X \in \X_\f$ which $Y$ can be transformed to using a free transformation, then $\underline{M}_K(Y) = \infty$ (limit of the empty diagram is the terminal object). 

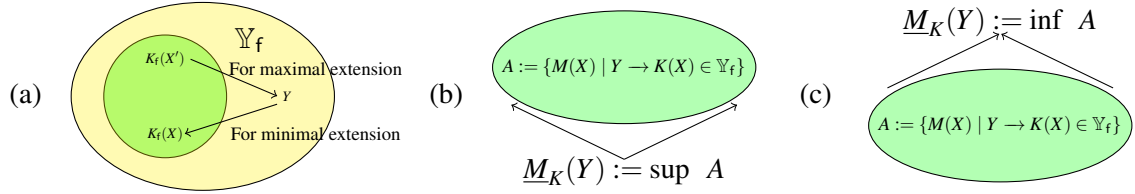
\begin{figure}
    \centering
  (a)~~~  \begin{tikzpicture}[scale=1]
    	\begin{pgfonlayer}{nodelayer}
         \node [style=none, scale=2] (Yf) at (1.25, 3.5) {$\Y_\f$};
         \node[style=none] (Y) at (2.25, 2) {$Y$};
         \node[style=none] (KX) at (-1, 1) {$K_\f(X)$};
         \node[style=none] (KXprime) at (-1, 3) {$K_\f(X')$};
         \node[style=none] (pesudoY) at (2, 2) { };
         \node[style=none] (pesudoYp) at (2, 1.8) { };
		\node [style=none] (pseudoX) at (-0.35, 3) {};
		\node [style=none] (pseudoX') at (-0.5, 1) {};

		\node [style=none,scale=1.3] (3) at (3, 2.75) {For maximal extension};
		\node [style=none, scale=1.3] (4) at (3, 1) {For minimal extension};
	\end{pgfonlayer}
	\begin{pgfonlayer}{edgelayer}
		\draw [->] (pesudoYp) to (pseudoX');
		\draw [->] (pseudoX) to (pesudoY);
	\end{pgfonlayer}
	\begin{pgfonlayer}{backgroundlayer}
		\node [style=circle, scale=9, fill=green, fill opacity=0.4] (0) at (-1, 2) {};
		\node[ellipse,  draw, minimum width = 7cm, 
    minimum height = 5cm, fill=yellow, fill opacity = 0.3] (e) at (0,2) {};
	\end{pgfonlayer}
\end{tikzpicture}~~~
(b)~~~ \begin{tikzpicture}[scale=1]
	\begin{pgfonlayer}{nodelayer}
		\node [style=none] (0) at (-4, 4) {};
		\node [style=none] (1) at (2, 4) {};
		\node [style=none, scale=1.3] (3) at (-1, 5) {$A := \{ M(X) ~|~ Y \to K(X) \in \Y_\f \}$};
		\node [style=none, scale=2] (2) at (-1, 2.25) {$\underline{M}_K(Y) := \sup~~A$};
		\node [style=none, scale=1.2] (4) at (-1, 2.55) {};
	\end{pgfonlayer}
	\begin{pgfonlayer}{edgelayer}
		\draw [<-] (0.center) to (4.center);
		\draw [<-] (1.center) to (4.center);
	\end{pgfonlayer}
	\begin{pgfonlayer}{backgroundlayer}
			\node [style=ellipse, draw, minimum width = 7cm, 
    minimum height = 3cm, fill=green, fill opacity = 0.3] (e) at (-1, 5) {};
	\end{pgfonlayer}
\end{tikzpicture} ~~~~ (c) ~~~ \begin{tikzpicture}[scale=1]
	\begin{pgfonlayer}{nodelayer}
		\node [style=none] (0) at (-4, 3) {};
		\node [style=none] (1) at (2, 3) {};
		\node [style=none, scale=1.3] (3) at (-1, 2) {$A := \{ M(X) ~|~ Y \to K(X) \in \Y_\f \}$};
		\node [style=none, scale=2] (2) at (-1, 4.75) {$\underline{M}_K(Y) := \inf~~A$};
		\node [style=none, scale=1.2] (4) at (-1, 4.45) {};
	\end{pgfonlayer}
	\begin{pgfonlayer}{edgelayer}
		\draw [->] (0.center) to (4);
		\draw [->] (1.center) to (4);
	\end{pgfonlayer}
	\begin{pgfonlayer}{backgroundlayer}
			\node [style=ellipse, draw, minimum width = 7cm, 
    minimum height = 3cm, fill= green, fill opacity = 0.3] (e) at (-1, 2) {};
	\end{pgfonlayer}
\end{tikzpicture}
    \caption{(a) Schematic of minimal and maximal extensions of any monotone along $K$; (b) Minimal extension of monotone $M$; (c) Minimal extension of an op-monotone $M$}
    \label{Fig: extensions schematic}
\end{figure}

We define what it means for the computed extensions to be optimal:
\begin{definition}
\label{defn: optimal}
The minimal extension of a monotone is {\bf optimal} if for any other monotone $G: (\Y, \Y_\f) \to \pCat{[0,\infty]}{\leq}$  such that for all $X \in \X$, $G(K(X)) \leq M(X)$, we have that
\[  G(Y) \leq \underline{M}_K(Y) \]

The maximal extension of a monotone is {\bf optimal} if for any other $G: (\Y, \Y_\f) \to \pCat{[0,\infty]}{\leq}$ such that for all $X \in \X$, $ M(X) \leq G(K(X))$, we have that 
\[ \overline{M}_K(Y) \leq G(Y)  \]
\end{definition}

For the extensions of an op-monotone to be optimal, $``\leq"$ is replaced by $``\geq"$ in definition~\ref{defn: optimal}.

\begin{theorem} 
\label{Lemma: extension properties}
Let $\underline{M}_K$ and $\overline{M}_K$ be minimal and maximal extensions of a monotone $M: (\X, \X_\f) \to \pCat{[0, \infty]}{\leq}$ along a pCat functor $K: (\X, \X_\f) \to (\Y, \Y_\f)$ as per definition~\ref{defn: monotone extensions}. Then, 

\begin{enumerate}[(a)]

\item {\bf Reduction:} For all $X \in \X$, 
\[ \underline{M}_K (K_\f(X))  \leq M(X) \leq \overline{M}_K(K_\f(X)) \]

\item {\bf Monotonicity:} For all $ f: A \to B \in \Y_\f$,  
 \[ \underline{M}_K (A) \leq \underline{M}_K (B)  \text{ and }
 \overline{M}_K (A) \leq \overline{M}_K (B) \]

\item {\bf Optimality:} $\underline{M}_K$ and $\overline{M}_K$ are optimal. 

\end{enumerate}
\end{theorem}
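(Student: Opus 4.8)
The plan is to work directly from the closed-form descriptions of the two extensions established in the preceding theorem, namely $\underline{M}_K(Y) = \inf\{M(X) \mid Y \to K(X) \in \Y_\f\}$ (equation~\eqref{eqn: min extension}) and $\overline{M}_K(Y) = \sup\{M(X) \mid K(X) \to Y \in \Y_\f\}$ (equation~\eqref{eqn: max extension}). Each of the three claims then reduces to an elementary order-theoretic fact about how the $\inf$ and $\sup$ of these indexed families behave, together with the two structural properties of a pCat: that $\Y_\f$ contains all identities and is closed under composition. I would also note that all three properties are instances of the universal property of Kan extensions (the natural transformations $\underline{\psi}$, $\overline{\psi}$ and the factorizations of Figure~\ref{Fig: universal extensions}), so the $\inf$/$\sup$ computation is really just an explicit unfolding of that universal data in the poset $\pCat{[0,\infty]}{\leq}$.

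For \textbf{Reduction} (a), I would substitute $Y = K_\f(X)$ into the two formulas. Because the identity $1_{K(X)}$ is a free transformation, $X$ itself indexes a term in each family: $K(X) \to K(X) \in \Y_\f$ witnesses that $M(X)$ lies in the set over which $\underline{M}_K(K_\f(X))$ takes an infimum and in the set over which $\overline{M}_K(K_\f(X))$ takes a supremum. Hence $\underline{M}_K(K_\f(X)) \leq M(X) \leq \overline{M}_K(K_\f(X))$ immediately. Equivalently, these two inequalities are exactly the components at $X$ of the Kan natural transformations $\underline{\psi} : K_\f \underline{M}_K \Rightarrow M_\f$ and $\overline{\psi} : M_\f \Rightarrow K_\f \overline{M}_K$.

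For \textbf{Monotonicity} (b), given a free arrow $f : A \to B \in \Y_\f$, I would use closure of $\Y_\f$ under composition. Any witness $B \to K(X) \in \Y_\f$ composes with $f$ to give $A \to K(X) \in \Y_\f$, so the index family for $\underline{M}_K(B)$ is contained in that for $\underline{M}_K(A)$; since the infimum over a larger family is no larger, $\underline{M}_K(A) \leq \underline{M}_K(B)$. Dually, any witness $K(X) \to A$ composes with $f$ to give $K(X) \to B$, enlarging the family for $\overline{M}_K(B)$, and the supremum over a larger family is no smaller, giving $\overline{M}_K(A) \leq \overline{M}_K(B)$. This is simply the assertion that $\underline{M}_K$ and $\overline{M}_K$ are functors into $\leq_{[0,\infty]}$.

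For \textbf{Optimality} (c), which is where the real content sits, I would invoke the fact that a competitor $G$ is itself a monotone, hence a pCat functor preserving free arrows. For the minimal extension, suppose $G(K(X)) \leq M(X)$ for all $X$. For every $X$ with $Y \to K(X) \in \Y_\f$, monotonicity of $G$ gives $G(Y) \leq G(K(X)) \leq M(X)$, so $G(Y)$ is a lower bound of the family defining $\underline{M}_K(Y)$, whence $G(Y) \leq \underline{M}_K(Y)$. Dually, if $M(X) \leq G(K(X))$ for all $X$, then for every $X$ with $K(X) \to Y \in \Y_\f$ we get $M(X) \leq G(K(X)) \leq G(Y)$, so $G(Y)$ is an upper bound of the family defining $\overline{M}_K(Y)$, giving $\overline{M}_K(Y) \leq G(Y)$. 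The main point to be careful about is variance: the statement is phrased for a genuine monotone into $\pCat{[0,\infty]}{\leq}$, so the inequalities run as above; for the op-monotone version the roles of $\inf$/$\sup$ and of $\leq$/$\geq$ swap throughout, and one must consistently use the op-monotone formulas~\eqref{eqn: min rev extension}--\eqref{eqn: max rev extension}. This bookkeeping, rather than any deep argument, is the only real obstacle.
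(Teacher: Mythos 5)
Your proof is correct, but it runs in the opposite direction from the paper's own argument. The paper never touches the pointwise formulas: it proves (a) by reading off the components of the Kan natural transformations $\underline{\psi}\colon K_\f\,\underline{M}_K \Rightarrow M_\f$ and $\overline{\psi}\colon M_\f \Rightarrow K_\f\,\overline{M}_K$ (in a posetal codomain a natural transformation is exactly a family of inequalities), proves (b) as bare functoriality of $\underline{M}_K$ and $\overline{M}_K$ into $\leq_{[0,\infty]}$, and proves (c) as the (co)universal property itself: the hypothesis $G(K(X)) \leq M(X)$ for all $X$ is precisely a natural transformation from $K_\f$ composed with (the free restriction of) $G$ to $M_\f$, and the factorization it induces through $\underline{\psi}$ is literally the conclusion $G(Y) \leq \underline{M}_K(Y)$, dually for the left extension. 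You instead unfold the $\inf$/$\sup$ formulas of equations~\eqref{eqn: min extension} and~\eqref{eqn: max extension} and argue order-theoretically from the pCat axioms (identities in $\Y_\f$ for Reduction, closure under composition for Monotonicity, preservation of free arrows by the competitor $G$ for Optimality); every step is sound, including the empty-family edge cases where the infimum is $\infty$ and the supremum is $0$. Note, however, that this is essentially the set-level argument of Gour and Tomamichel that the paper is deliberately avoiding: immediately after Corollary~\ref{corr: extension properties} the authors stress that their proof ``uses only the structural properties of the extensions rather than the formula used to compute them.'' Your route buys concreteness and self-containedness, and it does not presuppose that the reader has internalized the universal property; the paper's route buys brevity and generality, since it applies to any Kan extension whether or not a pointwise formula is available, and it makes the paper's conceptual point --- that the desired properties of monotone extensions are precisely the defining data of Kan extensions --- immediate. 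Your own parenthetical remark that the $\inf$/$\sup$ computation is just an explicit unfolding of the universal data in the poset is, in effect, the paper's entire proof.
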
 
\begin{proof}~

\begin{enumerate}[(a)]

\item Since $(\underline{M}_K, \leq)$ is the right Kan extension, for all $X \in \X$, $\underline{M}_K(K_\f(X)) \leq M_\f(X) = M(X)$ (see Figure~\ref{Fig: extensions}-(a)).

Since $(\overline{M}_K, \leq)$ is the left Kan extension, for all $X \in \X$, $M(X) = M_\f(X) \leq \overline{M}_K(K_\f(X))$ (see Figure~\ref{Fig: extensions}-(b)).

\item Monotonicity follows from functoriality of $\underline{M}_K$ and $\overline{M}_K$

\item The extensions are optimal by construction (see Figure~\ref{Fig: extensions} for the universal properties). 
\end{enumerate}
\end{proof}

In the above lemma, Statement $(a)$ tells us that the minimal and maximal extensions are respectively a lower and upper bound for $M$ on $\X$. Statement $(b)$ assures that the extensions are monotonic on free transformations. Statement $(c)$ assures that the minimal and maximal extensions are respectively the greatest lower bound and the least upper bound for any other extension of $M$ along $K$, hence are optimal.

\begin{corollary} 
\label{corr: extension properties}
Let $\underline{M}_K$ and $\overline{M}_K$ be minimal and maximal extensions of an op-monotone $M: (\X, \X_\f) \to \pCat{[0, \infty]}{\geq}$ along a pCat functor $K: (\X, \X_\f) \to (\Y, \Y_\f)$ as per definition~\ref{defn: monotone extensions}. Then, the following properties hold for the extensions:

\begin{enumerate}[(a)]

\item {\bf Reduction:} For all $X \in \X$, \[ \underline{M}_K (K(X))  \geq M(X) \geq \overline{M}_K(K(X))\]

\item {\bf Monotonicity:} For all $ f: A \to B \in \Y_\f$, \[  \underline{M}_K (A) \geq \underline{M}_K (B)  \text{ and }
 \overline{M}_K (A) \geq \overline{M}_K (B)  \]

\item {\bf Optimality:} $\underline{M}_K$ and $\overline{M}_K$ are optimal. 
\end{enumerate}
\end{corollary}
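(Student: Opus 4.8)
The plan is to obtain this corollary as a direct dualization of Theorem~\ref{Lemma: extension properties}, exploiting the identity $\pCat{[0,\infty]}{\geq} = \pCat{[0,\infty]}{\leq}^\op$ already recorded in the proof of the preceding corollary. An op-monotone $M: (\X,\X_\f) \to \pCat{[0,\infty]}{\geq}$ is nothing but a monotone whose target poset is $\pCat{[0,\infty]}{\leq}^\op$, and by Definition~\ref{defn: monotone extensions} its minimal and maximal extensions $\underline{M}_K$, $\overline{M}_K$ are still the right and left Kan extensions of $M_\f$ along $K_\f$, now computed in $\pCat{[0,\infty]}{\leq}^\op$. Since Theorem~\ref{Lemma: extension properties} was proved using only the (co)universal properties of these Kan extensions together with functoriality, and these notions are self-dual, each of its three arguments transports verbatim to the opposite target; the only change is that every arrow of the target poset now encodes $\geq$ rather than $\leq$, so each inequality in the conclusion flips.

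Concretely, for Reduction I would invoke the defining natural transformations $\underline{\psi}: K_\f\underline{M}_K \Rightarrow M_\f$ and $\overline{\psi}: M_\f \Rightarrow K_\f\overline{M}_K$. In $\pCat{[0,\infty]}{\leq}$ the component of $\underline{\psi}$ at $X$ reads $\underline{M}_K(K_\f(X)) \leq M(X)$; reading the same component in $\pCat{[0,\infty]}{\geq}$ --- where an arrow $a \to b$ now means $a \geq b$ --- turns this into $\underline{M}_K(K_\f(X)) \geq M(X)$, and dually $M(X) \geq \overline{M}_K(K_\f(X))$, yielding statement (a). For Monotonicity, functoriality of $\underline{M}_K, \overline{M}_K: \Y_\f \to \geq_{[0,\infty]}$ gives, for each $f: A \to B \in \Y_\f$, an arrow $\underline{M}_K(A) \to \underline{M}_K(B)$ in $\geq_{[0,\infty]}$, i.e.\ $\underline{M}_K(A) \geq \underline{M}_K(B)$, and likewise for $\overline{M}_K$, which is statement (b). For Optimality, the universal property of the right Kan extension and the couniversal property of the left Kan extension again supply the comparison maps to and from any competing extension $G$; reinterpreting these arrows in the reversed order gives statement (c) with $\leq$ replaced by $\geq$, matching the op-monotone version of Definition~\ref{defn: optimal}.

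The only place demanding care --- and the step I would check most carefully --- is the bookkeeping of arrow orientation: one must consistently read each component of the Kan-extension natural transformations and each functoriality arrow as an instance of $\geq$ rather than $\leq$, since it is precisely this reversal that produces the flipped inequalities. Once that convention is fixed, no genuinely new computation is needed; the corollary is the mirror image of Theorem~\ref{Lemma: extension properties} under $[0,\infty] \mapsto [0,\infty]^\op$, consistent with the fact that equations~\eqref{eqn: min rev extension} and~\eqref{eqn: max rev extension} already exhibit $\inf$ and $\sup$ interchanged relative to equations~\eqref{eqn: min extension} and~\eqref{eqn: max extension}.
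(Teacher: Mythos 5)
Your proposal is correct and follows exactly the paper's route: the paper's entire proof is the observation that $\pCat{[0,\infty]}{\geq} = \pCat{[0,\infty]}{\leq}^\op$, so the corollary is Theorem~\ref{Lemma: extension properties} read in the opposite poset, with every inequality flipped. Your write-up simply unpacks this dualization component by component (Kan-extension transformations for Reduction, functoriality for Monotonicity, (co)universality for Optimality), which is a more explicit rendering of the same one-line argument.
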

\begin{proof}
$\pCat{[0, \infty]}{\geq} = \pCat{[0, \infty]}{\leq}^\op$
\end{proof}

Our Corollary~\ref{corr: extension properties} corresponds to \cite[Theorem 1]{GoT20}. However, in contrast to the proof in \cite{GoT20}, our proof uses only the structural properties of the extensions rather than the formula used to compute them. Moreover, Lemma~\ref{Lemma: extension properties} is more general since in \cite[Theorem 1]{GoT20}, $K: (\X, \X_\f) \to (\Y, \Y_\f)$ is fixed to be a full and faithful inclusion. 

\cite[Theorem 1 - (a)]{GoT20} can be recovered precisely by fixing $K$ to be full and faithful functor in Corollary~\ref{corr: extension properties}:

\begin{corollary}
\label{Lemma: extension properties ff}
If $\underline{M}_F$ and $\overline{M}_F$ are minimal and maximal extensions respectively of an op-monotone $M: (\X, \X_\f) \to \pCat{[0,\infty]}{\geq}$ along a full and faithful (ff) functor $F: (\X, \X_\f) \to (\Y, \Y_\f)$, then: 

\begin{enumerate}[(a)]
\item {\bf ff-Reduction:} The extensions exactly preserves the value of the resources in $\X$ under the action of $F$: 
\[\overline{M}_F(F(X)) = M(X) = \underline{M}_F(F(X)) \]

\item {\bf ff-Optimality:} For any other monotone $G: (\X, \X_\f) \to \pCat{[0,\infty]}{\geq}$, which exactly preserves the value of the resources in $\X$ under the action of $F$, that is, ($ G (F(X)) = M(X)$), then for all $Y \in \Y$:

\[\overline{M}_F(F(Y)) \geq G(Y) \geq \underline{M}_F(F(Y)) \]
\end{enumerate}
\end{corollary}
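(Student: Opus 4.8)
The plan is to reduce both parts to results already established for arbitrary pCat functors, specializing them to a full and faithful $F$. Part (a) will come from the fact (Corollary~\ref{Lem: full and faithful}) that along a full and faithful functor the comparison natural transformation of a Kan extension is an isomorphism; part (b) will come directly from the optimality clause of Corollary~\ref{corr: extension properties}, exploiting that the hypothesis $G(F(X)) = M(X)$ packages two inequalities at once.

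For part (a) I would first recall that $\underline{M}_F$ and $\overline{M}_F$ are by definition the right and left Kan extensions of $M_\f$ along $F_\f: \X_\f \to \Y_\f$, so that once $F_\f$ is known to be full and faithful, Corollary~\ref{Lem: full and faithful} makes the comparison transformations $\underline{\psi}$ and $\overline{\psi}$ isomorphisms. Since the target $\pCat{[0,\infty]}{\geq} = \pCat{[0,\infty]}{\leq}^\op$ is a poset, an isomorphism is an equality; the reduction inequalities of Corollary~\ref{corr: extension properties}(a) already give $\underline{M}_F(F(X)) \geq M(X) \geq \overline{M}_F(F(X))$, and upgrading the comparison maps to equalities collapses this chain to $\overline{M}_F(F(X)) = M(X) = \underline{M}_F(F(X))$. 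Alternatively, and more concretely, I can evaluate the formulas~\eqref{eqn: min rev extension} and~\eqref{eqn: max rev extension} at $Y = F(X_0)$: because $1_{F(X_0)} = F(1_{X_0})$ is free, $M(X_0)$ lies in each indexing set, while every free $F(X_0) \to F(X)$ (resp. $F(X) \to F(X_0)$) is the image of a free $X_0 \to X$ (resp. $X \to X_0$) in $\X_\f$, so op-monotonicity of $M$ forces every other member to sit on the correct side of $M(X_0)$; hence the supremum in~\eqref{eqn: min rev extension} and the infimum in~\eqref{eqn: max rev extension} are both attained at $X_0$ and equal $M(X_0)$.

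For part (b) I would note that $G(F(X)) = M(X)$ supplies simultaneously $M(X) \geq G(F(X))$ and $G(F(X)) \geq M(X)$. The first is exactly the hypothesis under which the maximal extension is optimal (Corollary~\ref{corr: extension properties}(c), in the op-monotone form of Definition~\ref{defn: optimal}), giving $\overline{M}_F(Y) \geq G(Y)$ for all $Y$; the second is the hypothesis for optimality of the minimal extension, giving $G(Y) \geq \underline{M}_F(Y)$. Chaining these yields the sandwich $\overline{M}_F(Y) \geq G(Y) \geq \underline{M}_F(Y)$ (the occurrence of $F(Y)$ in the statement being a typo for $Y$, since $Y \in \Y$ already lies in the codomain).

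The main obstacle is the bookkeeping hidden in part (a): Corollary~\ref{Lem: full and faithful} requires the functor one extends along, namely $F_\f: \X_\f \to \Y_\f$, to be full and faithful, whereas the hypothesis only asserts this for $F: \X \to \Y$. Faithfulness descends automatically, since $F_\f$ is a restriction of a functor injective on hom-sets. Fullness of $F_\f$ is the delicate point: given a free $g: F(A) \to F(B)$ in $\Y_\f$, fullness of $F$ yields some $f \in \X$ with $F(f) = g$, and one must argue $f \in \X_\f$. For the full and faithful inclusions arising in the examples (e.g.\ ${\sf LOCC_p} \hookrightarrow {\sf LOCC}$) this holds because the free maps of the subtheory are precisely the free maps of the larger theory lying between subtheory objects, so $F$ reflects freeness. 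I would therefore isolate this reflection-of-freeness property as the precise hypothesis needed, after which both the abstract argument via Corollary~\ref{Lem: full and faithful} and the explicit $\sup$/$\inf$ computation go through without further work.
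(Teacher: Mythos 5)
Your proposal follows the same route as the paper's own (very terse) proof: part (a) is deduced from Corollary~\ref{Lem: full and faithful} (the comparison transformation of a Kan extension along a full and faithful functor is an isomorphism, hence an equality in a posetal target), and part (b) is obtained by splitting the hypothesis $G(F(X)) = M(X)$ into two inequalities and invoking the optimality clause of Corollary~\ref{corr: extension properties} once for the minimal and once for the maximal extension. Your reading of $\overline{M}_F(F(Y))$ as a typo for $\overline{M}_F(Y)$ is also right: the paper's own proof states its conclusion with $Y$ rather than $F(Y)$ (and, incidentally, with the two inequalities in the reversed order; the order you derive, $\overline{M}_F(Y) \geq G(Y) \geq \underline{M}_F(Y)$, is the correct one for op-monotones and agrees with the statement being proved).

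The obstacle you isolate in your final paragraph is a genuine gap in the paper, not mere bookkeeping, and your handling of it is the main point where you are more careful than the paper. By Definition~\ref{defn: monotone extensions} the Kan extensions are taken of $M_\f$ along $F_\f : \X_\f \to \Y_\f$, so Corollary~\ref{Lem: full and faithful} requires $F_\f$ to be full and faithful, whereas the paper's definition of a full and faithful pCat functor constrains only $F : \X \to \Y$; the paper's one-line proof of (a) silently conflates the two. The distinction matters: take $\X$ with objects $A, B$, a single non-free map $A \to B$ (so $\X_\f$ is discrete), an op-monotone with $M(A) = 0$, $M(B) = 5$, and let $F$ be the identity functor into $\Y = \X$ equipped with $\Y_\f = \Y$. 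Then $F$ is a full and faithful pCat functor, yet
\begin{equation*}
\underline{M}_F(F(A)) \;=\; \sup\{M(A), M(B)\} \;=\; 5 \;\neq\; 0 \;=\; M(A),
\end{equation*}
so ff-Reduction fails. Your proposed repair --- demanding that $F$ reflect freeness between objects in its image, which is exactly fullness of $F_\f$, and observing that the paper's examples such as $\LOCCp \hookrightarrow \LOCC$ satisfy it --- is the right fix; with that hypothesis both your abstract argument and your explicit $\sup$/$\inf$ computation go through. Note also that part (b) needs no such repair: as your argument shows, it uses only the universal (optimality) property and holds for an arbitrary pCat functor.
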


\begin{proof} 
Statement (a) follows from Lemma~\ref{Lem: full and faithful}. 

For Statement (b), it is given that for all $X \in \X$, $G(F(X)) = F(X)$. From Lemma~\ref{Lemma: extension properties}-(c), it follows that for all $Y \in \Y$,
\[ \underline{M}_F(Y) \geq G(Y) \geq \overline{M}_F(Y) \]

\end{proof}

\subsubsection{ Extending bipartite entanglement monotone from pure to mixed states:}

\begin{example}
We introduced the op-monotone $\Schmidt: (\PureBip, \LOCCp) \to \pCat{[0,\infty]}{\geq}$ in Example~\ref{eg: Schmidt}. Let us extend the monotone from pure bipartite states to mixed states along $i: (\PureBip, \LOCCp) \hookrightarrow (\Bip, \LOCC)$ (defined in Lemma~\ref{lem: Bip inclusion}), something that was already done in \cite{TeP00}, however without the general machinery for computing extensions. 

Figure~\ref{Fig: Bip extensions} presents the diagrams corresponding to minimal and maximal extensions of the monotone $\Schmidt$ along the inclusion. The minimal and the maximal extensions are computed using equations~\eqref{eqn: min rev extension} and~\eqref{eqn: max rev extension} respectively.

It was pointed out in \cite{GoT20} that the definition for the Schmidt entanglement monotone on mixed bipartite states introduced in \cite{TeP00} coincides with equation~\eqref{eqn: max rev extension} referring to the maximal extension of $\Schmidt$. 

\begin{figure}[h]
\[{\small (a)~~~  \xymatrixcolsep{3pc} \xymatrix{ 
&  \ar@<35pt>@{=>}[d]^{\geq} &\\
\LOCCp \ar@{^{(}->}[r]^{i} \ar@/_3pc/[rr]_{{\sf Schmidt}_\LOCCp} & \LOCC \ar@/^3pc/[r]^{G}
\ar@{.>}[r]_---{\underline{{\sf Schmidt}}_i} \ar@{=>}[d]^{\geq}  & \geq_{[0,\infty]} \\ 
 & & &
  }
   ~~~~~ 
 (b)~~~  \xymatrix{ 
 &  \ar@<35pt>@{<=}[d]^{\geq} &\\
 \LOCCp \ar@{^{(}->}[r]^{i} \ar@/_3pc/[rr]_{{\sf Schmidt}_\LOCCp} & \LOCC \ar@/^3pc/[r]^{G} \ar@{.>}[r]_---{\overline{{\sf Schmidt}}_i} \ar@{<=}[d]^{\geq}  & \geq_{[0,\infty]} \\ 
 & & &
  }} \]
 \caption{(a) Minimal extension of ${\sf Schmidt}: (\PureBip, \LOCCp)$ along $i: (\PureBip, \LOCCp) \hookrightarrow (\Bip, \LOCC)$; (b) Maximal extension of ${\sf Schmidt}: (\PureBip, \LOCCp)$ along $i: (\PureBip, \LOCCp) \hookrightarrow (\Bip, \LOCC)$ }
 \label{Fig: Bip extensions}
\end{figure}
\end{example}

\subsubsection{Extending classical divergences}

Next we examine the properties of extensions of classical divergences to quantum setting:

\begin{lemma}
Let $D: (\cDivergence, \cProcessing) \to \pCat{[0, \infty]}{\geq}$ be a classical divergence as defined in Definition~\ref{defn: divergence monotone}. Let $\underline{D}_i$ and $\overline{D}_i$ be the minimal and maximal extensions respectively of $D$ along $i: (\cDivergence, \cProcessing) \hookrightarrow (\Divergence, \Processing)$. Then the extensions satisfy the following properties:

\begin{enumerate}[(a)]

\item {\bf Reduction}: For all $((p, q),X) \in \Divergence$, $\underline{D}_i(p || q) = D(p || q) = \overline{D}_i(p || q)$

\item {\bf Monotonicity}: For any $M: ((p,q),X) \to ((p', q'), Y)$, $\underline{D}_i(p || q) \geq \underline{D}_i(pM || qM)$ and 
$\overline{D}_i(p || q) \geq \overline{D}_i(pM || qM)$

\item {\bf Optimality}: Suppose $D': (\Divergence, \Processing) \rightarrow \pCat{[0,\infty]}{\geq}$ is any pCat functor such that  for all $((p,q),X) \in \cDivergence$, $D'(i((p,q),X)) = D(((p,q),X))$. Then for all $ ((\rho, \sigma), H) \in \Divergence$,
\begin{equation}
\label{eqn:optimal div}
    \underline{D}_i(\rho || \sigma) \geq G(\rho || \sigma) \geq \overline{D}_i( \rho || \sigma)
\end{equation}  
\end{enumerate}
\end{lemma}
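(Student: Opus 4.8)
The plan is to read the whole lemma as a specialization of the general Kan-extension results already proved, instantiated at the single functor $i$. Two structural facts make this reduction legitimate. First, by Lemma~\ref{Lemma: Lemma: Div inclusion ff} the inclusion $i \colon (\cDivergence, \cProcessing) \hookrightarrow (\Divergence, \Processing)$ is a full and faithful pCat functor. Second, by the definition of a classical divergence (Example~\ref{defn: divergence monotone}) the functor $D$ obeys the data-processing inequality, hence is an op-monotone $D \colon (\cDivergence, \cProcessing) \to \pCat{[0,\infty]}{\geq}$. With these in hand I would simply instantiate the earlier general statements with $F := i$ and $M := D$; all the content is in matching the concrete setting to their hypotheses, and no new inequality has to be established by hand.

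For the reduction (a), I would apply the ff-Reduction clause of Corollary~\ref{Lemma: extension properties ff} to $F = i$, $M = D$. It gives $\overline{D}_i(i(X)) = D(X) = \underline{D}_i(i(X))$ for every classical resource $X = ((p,q),X)$, which is exactly $\underline{D}_i(p \| q) = D(p \| q) = \overline{D}_i(p \| q)$. The reason this holds is that, $i$ being full and faithful, the pair $(i(X), 1_{i(X)})$ is terminal (resp.\ initial) in the relevant coslice (resp.\ slice) category, so by Corollary~\ref{Lem: full and faithful} the comparison natural transformation is an isomorphism and the optimization over the image of $i$ is trivial.

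For the monotonicity (b), the decisive observation is that the transformation acts by the \emph{same} map $M$ on both components of the pair, i.e.\ it is the free transformation $(M,M) \colon ((p,q),X) \to ((pM, qM),Y)$ of $\Divergence$, freeness being precisely the condition that the two components agree (Example~\ref{eg: Divergence}). Monotonicity of the extensions along free arrows is Corollary~\ref{corr: extension properties}(b); applied to this free arrow it yields $\underline{D}_i(p \| q) \geq \underline{D}_i(pM \| qM)$ and $\overline{D}_i(p \| q) \geq \overline{D}_i(pM \| qM)$, which is the data-processing inequality for the two extended divergences. The only thing to check is that ``applying the same processing to both states'' is genuinely a free transformation of $\Processing$, which is immediate from its definition.

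For the optimality (c), I would invoke the ff-Optimality clause of Corollary~\ref{Lemma: extension properties ff} with $F = i$, $M = D$, and competing op-monotone $G := D'$. The assumption that $D'$ reproduces $D$ on the classical subtheory, $D'(i((p,q),X)) = D((p,q),X)$, is exactly its value-preservation hypothesis, so the corollary returns the two-sided bound $\underline{D}_i(\rho \| \sigma) \geq G(\rho \| \sigma) \geq \overline{D}_i(\rho \| \sigma)$ for every quantum resource $((\rho,\sigma),H)$, i.e.\ the asserted inequality~\eqref{eqn:optimal div}. The main obstacle --- and the one place a sign slip is tempting --- is to keep track of the reversed order of $\pCat{[0,\infty]}{\geq}$: since $D$ is an op-monotone, the minimal extension $\underline{D}_i$ is computed as a supremum and the maximal extension $\overline{D}_i$ as an infimum (equations~\eqref{eqn: min rev extension} and~\eqref{eqn: max rev extension}), so one must be careful which extension bounds $G$ from above and which from below. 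Quoting Corollary~\ref{Lemma: extension properties ff} directly, rather than re-deriving the sup/inf comparison against $G$ term by term, is what pins the chain to the stated orientation $\underline{D}_i \geq G \geq \overline{D}_i$.
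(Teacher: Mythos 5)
Your proposal is correct and takes essentially the same route as the paper: the paper's proof likewise reduces (a) and (c) to the ff-Reduction and ff-Optimality clauses of Corollary~\ref{Lemma: extension properties ff} via the full-and-faithful inclusion established in Lemma~\ref{Lemma: Lemma: Div inclusion ff}, and derives (b) from the general monotonicity property of the extensions. If anything, you are more careful than the paper, which cross-references the monotone rather than op-monotone versions of these results and leaves implicit both the freeness of $(M,M)$ in $\Processing$ and the sup/inf orientation of the extensions of an op-monotone.
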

\begin{proof}
By Lemma~\ref{Lemma: Lemma: Div inclusion ff}, the inclusion $i: \cDivergence \hookrightarrow \Divergence$ is full and faithful. Hence, statement (a) and Statement (c) follows directly from ff-Reduction and ff-Optimality properties respectively in Lemma~\ref{Lemma: extension properties}. Statement (b) follows from Monotonicity property in Lemma~\ref{Lemma: extension properties ff}-(b). 
\end{proof}

In the above statement, by the {\bf Reduction} property, $\underline{D}_i$ and $\overline{D}_i$ reduces to classical divergence $D$ on the classical states (pairs of density matrices with off diagonal elements to be zero). The {\bf optimality} property ensures that, for any other quantum divergence that coincides with $D$ on the classical states, must lie between the maximal and minimal extensions in the sense of Eqn.~\eqref{eqn:optimal div}.

\subsubsection{Extending Shannon entropy}

Now we show that Kan extensions are related to some proposals of extending Shannon entropy from classical states to states of a general physical theory \cite{Entropy-Barnum,Entropy-Short,Entropy-Kimura,ScandoloPhD,TowardsThermo,Colleagues}. Specifically, a measurement and a preparation extensions were proposed. Here, for simplicity, we will explain them in the context of quantum theory. In more detail, the measurement
entropy $H_{{\rm meas}}$ of a quantum state $\rho$
is defined as
\begin{equation}
\label{eqn: Hmeas}
H_{{\rm meas}}\left(\rho\right):=\inf_{\boldsymbol{F}}~H\left(q \right),
\end{equation}
where the infimum is taken over all rank-one POVMs  $\boldsymbol{F}:=\left\lbrace F_{j}\right\rbrace $, and $q$ is a probability distribution with $q_{j}:=\mathrm{tr}\:F_j\rho$. Recall that a POVM is a collection of positive semi-definite operators $\left\lbrace F_j\right\rbrace$ that sum to the identity.
On the other hand, the preparation entropy $H_{{\rm prep}}$ is defined as
\begin{equation}
\label{eqn: Hprep}
H_{{\rm prep}}\left(\rho\right):=\inf_{\sum_{j}\lambda_{j}\psi_{j}=\rho}H\left(\lambda \right),
\end{equation}

where the infimum is over all convex decompositions $\sum_{j}\lambda_{j}\psi_{i}$
of the state $\rho$ in terms of pure states $\psi_{j}$ (recall that a quantum state $\psi$ is pure if $\psi^2=\psi$). In words, the measurement entropy $H_{{\rm meas}}$ is the smallest
amount of randomness (as measured by Shannon entropy $H$) present in
the probability distributions generated by rank-one POVMs
on $\rho$. On the other hand, the preparation entropy $H_{{\rm prep}}$
is the smallest amount of randomness necessary to prepare $\rho$ as
an convex combination of pure states.

Let us consider the inclusion (Example~\ref{eg: non-uniformity inclusion})  of resource theory of non-uniformity (given in Example~\ref{eg: RandUni}) into  quantum non-uniformity (given in Example~\ref{eg: qRandUni}). Consider extending the monotone $\Shannon: (\Rand, \Uniform) \to (\chaos_{[0,\infty]}, \leq_{[0,\infty]})$ (given in Example~\ref{eg: Shannon entropy}) along the inclusion as shown in Figure~\ref{Fig: Shannon entopy extensions}. By the Kan extensions formula in equations~\eqref{eqn: min extension} and~\eqref{eqn: max  extension}, the minimal and maximal extension of $\Shannon$ are given as follows:

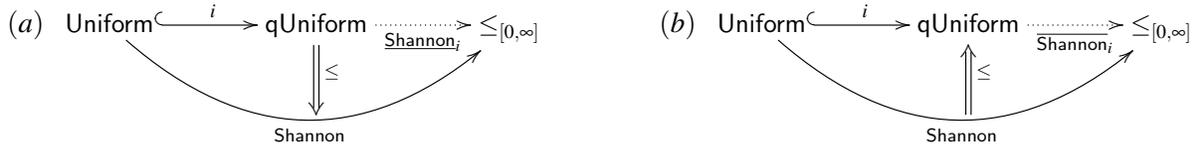
\begin{figure}[ht]
\[{ \small (a)~~  \xymatrixcolsep{3pc} \xymatrix{ 
\Uniform \ar@{^{(}->}[r]^{i} \ar@/_3pc/[rr]_{\Shannon} & \qUniform 
\ar@{.>}[r]_---{\underline{\Shannon}_i} \ar@{=>}[d]^{\leq}  & \leq_{[0,\infty]} \\ 
 & & &
  }
 (b)~~  \xymatrix{ 
 \Uniform \ar@{^{(}->}[r]^{i} \ar@/_3pc/[rr]_{\Shannon} & \qUniform  \ar@{.>}[r]_---{\overline{\Shannon}_i} \ar@{<=}[d]^{\leq}  & \leq_{[0,\infty]} \\ 
 & & &
  } } \]
 \caption{(a) Minimal extension of $\Shannon: (\Rand, \Uniform)$ along $i: (\Rand, \Uniform) \hookrightarrow (\qRand, \qUniform)$; (b) Maximal extension of $\Shannon: (\Rand, \Uniform)$ along $i: (\Rand, \Uniform) \hookrightarrow (\qRand, \qUniform)$ }
 \label{Fig: Shannon entopy extensions}
\end{figure}

\begin{enumerate}[]
\item For all $\rho \in \qUniform$, the minimal extension $\underline{\Shannon}_i: \qUniform \to \leq_{[0,\infty]}$ is given as:
\begin{equation}
\label{eqn: min Shannon extension}
 \underline{\Shannon}_i(\rho) := \inf \{ \Shannon(p) ~|~ \rho \to i(p) \in \qUniform \} 
\end{equation}

For all $\rho \in \qUniform$, the maximal extension $\overline{\Shannon}_i: \qUniform \to \leq_{[0,\infty]}$ is given as:
\begin{equation}
\label{eqn: max Shannon extension}
    \overline{\Shannon}_i(\rho) := \sup \{ \Shannon(p) ~|~ i(p) \to \rho \in \qUniform \} 
\end{equation} 
\end{enumerate}

Let us have a closer look at equation~\eqref{eqn: min Shannon extension} The only unital channels from a quantum to a classical system are given by rank 1 projective measurements $\{P_j \}$, where 
 $P_j$ are rank 1 orthogonal projectors. With this in mind, eqn.~\eqref{eqn: min Shannon extension} can be rewritten as follows:
 
 \[ 
    \underline{\Shannon}_i(\rho) :=\inf_{\boldsymbol{P}}~H\left(q \right),
 \]
 where the infimum is taken over all rank-one projective measurements  $\boldsymbol{P}:=\left\lbrace P_{j}\right\rbrace $, and $q$ is a probability distribution with $q_{j}:=\mathrm{tr}\:P_j\rho$. Now we are going to show that $\underline{\Shannon}_i(\rho)=H_{{\rm meas}}\left(\rho\right)$. To this end, notice that $\underline{\Shannon}_i(\rho)\geq H_{{\rm meas}}\left(\rho\right)$ because the infimum in the definition of $\underline{\Shannon}_i(\rho)$ is over a smaller set. In theorem~5.4.15 of~\cite{ScandoloPhD} it was shown that $ H_{{\rm meas}}\left(\rho\right)$ is achieved by considering the spectral POVM, which is a rank-1 projective measurement. Being $\underline{\Shannon}_i(\rho)$ defined as the infimum over rank-1 projective measurements, then we also have $\underline{\Shannon}_i(\rho)\leq H_{{\rm meas}}\left(\rho\right)$, from which we conclude that $\underline{\Shannon}_i(\rho)=H_{{\rm meas}}\left(\rho\right)$. Since $H_{{\rm meas}}\left(\rho\right)$ is achieved by the spectral measurement, we know that $H_{{\rm meas}}\left(\rho\right)= H\left(p\right)$, where $p$ denotes the classical vector of the spectrum of $\rho$. This shows that $H_{{\rm meas}}$ as defined in equation~\eqref{eqn: Hmeas} is indeed a monotone, as it coincides with the minimal Kan extension. 

Let us now have a closer look at equation~\eqref{eqn: max Shannon extension}. The only unital channels from a classical to a quantum system are given by preparations of a convex combination of pure states $\left\{\psi_j\right\}$ associated with an orthonormal basis of the Hilbert Space corresponding to the quantum system, where the coefficients are the entries of the classical state on which the channel acts. With this in mind, eqn~\eqref{eqn: max Shannon extension} can be rewritten as follows:
 
 \[ 
    \overline{\Shannon}_i(\rho) :=\sup_{\sum_{j}\lambda_{j}\psi_{j}=\rho}H\left(\lambda \right),
 \]
 where the supremum is taken over all decompositions of $\rho$ into \emph{orthogonal} pure states. Now, we we observe that all such decompositions are diagonalizations of $\rho$ (that is, $\rho = \sum_{j} \lambda_j | \psi_j \rangle \langle \psi_j |$ with $\lambda$ being a probability distribution), and therefore they have the same coefficients $\lambda_j$, which are the eigenvalues of $\rho$. In other words,  $\overline{\Shannon}_i(\rho)= H\left(p\right)$, where $p$ denotes the classical vector of the spectrum of $\rho$. Since there is only one vector $\lambda$ (up to permutation) to optimize over, the supremum can be replaced with an infimum. With this in mind, we obtain an expression that is close the preparation entropy.
  \[ 
    \overline{\Shannon}_i(\rho) :=\inf_{\sum_{j}\lambda_{j}\psi_{j}=\rho}H\left(\lambda \right),
 \]
where the infimum is taken over all decompositions of $\rho$ into \emph{orthogonal} pure states. In Theorem~5.4.15 of~\cite{ScandoloPhD} it was shown that $ H_{{\rm prep}}\left(\rho\right)= H\left(p\right)$, from which we have that $\overline{\Shannon}_i(\rho)= H_{{\rm prep}}\left(\rho\right)$. This shows that $H_{{\rm meas}}$ as defined in equation~\eqref{eqn: Hprep} is indeed a monotone, as it coincides with the maximal Kan extension.

Notice that in this example, the minimal and maximal Kan extensions coincide.

\section{Conclusion}

In this article, we studied resource theories as partitioned categories (pCats) and relationship between resource theories as pCat functors thereof. A partitioned category (pCat) is a category with a chosen subcategory  of free transformations. In this framework, a  monotone for a resource theory can be viewed as a pCat functor from the theory into $(\chaos_{[0,\infty]}, \leq_{[0,\infty]})$ where the 
pCat $(\chaos_{[0,\infty]}, \leq_{[0,\infty]})$ represents the partial order $([0,\infty], \leq)$.

We showed that a monotone can be extended from one theory to another using Kan extensions. We applied our framework to extend entanglement monotones for bipartite pure states to bipartite mixed states, to extend classical divergences to the quantum setting, and to extend non-uniformity monotone from classical probabilistic theory to quantum theory.
 
This project was inspired by Gour and Tomamichel's work \cite{GoT20} (see also \cite{Gonda}), which uses a set-based framework to provide formulae for the minimal and maximal extensions of a monotone for a resource theory that embeds (fully and faithfully) in a larger theory. The goal of our work was to present resource theories and monotones in a framework such that the extension formulae for monotones arise naturally. We found that they are precisely given by the well-studied notion of Kan extensions. On top of providing a natural ground to study extensions of monotones, we should also note that our categorical framework is  also more general than the framework in \cite{GoT20}, in that it can be used to compute monotone extensions when the pCat functor between resource theories is not a full and faithful embedding.

\section*{Acknowledgments}
P.\ V.\ S.\ thanks Dr.\ Spencer Breiner, Dr.\ Joe Moeller, and Dr.\ Eswaran Subramanian for  valuable discussions. C.\ M.\ S.\ acknowledges the support of the Natural Sciences and
Engineering Research Council of Canada (NSERC) through the Discovery
Grant ``The power of quantum resources'' RGPIN-2022-03025 and the
Discovery Launch Supplement DGECR-2022-00119.

\bibliographystyle{eptcs}
\bibliography{main}

\end{document}